\DeclareSymbolFont{matha}{OML}{txmi}{m}{it}
\DeclareMathSymbol{\varv}{\mathord}{matha}{118}
\begin{document}
	\title{Intelligent Reflecting Surface-assisted MU-MISO Systems with Imperfect Hardware: Channel Estimation  and Beamforming Design}
	\author{Anastasios Papazafeiropoulos, Cunhua Pan, Pandelis Kourtessis, Symeon Chatzinotas, John M. Senior \thanks{A. Papazafeiropoulos is with the Communications and Intelligent Systems Research Group, University of Hertfordshire, Hatfield AL10 9AB, U. K., and with SnT at the University of Luxembourg, Luxembourg. C. Pan with the School of Electronic Engineering and Computer Science at Queen Mary University of London, London E1 4NS, U.K. P. Kourtessis and John M.Senior are with the Communications and Intelligent Systems Research Group, University of Hertfordshire, Hatfield AL10 9AB, U. K. S. Chatzinotas is with the SnT at the University of Luxembourg, Luxembourg. A. Papazafeiropoulos was supported  by the University of Hertfordshire's 5-year Vice Chancellor's Research Fellowship.
		S. Chatzinotas   was supported by the National Research Fund, Luxembourg, under the projects RISOTTI. E-mails: tapapazaf@gmail.com, , c.pan@qmul.ac.uk, \{p.kourtessis,j.m.senior\}@herts.ac.uk, symeon.chatzinotas@uni.lu.}}
	\maketitle\vspace{-1.7cm}
	\begin{abstract}
		Intelligent reflecting surface (IRS), consisting of low-cost passive elements, is a promising technology for improving the spectral and energy efficiency of the fifth-generation (5G) and beyond networks. It is also noteworthy that an IRS  can shape the reflected signal propagation. Most works in IRS-assisted systems have ignored the impact of the inevitable residual hardware impairments (HWIs) at both the transceiver hardware and the IRS while any relevant works have addressed only simple scenarios, e.g., with single-antenna network nodes and/or without taking the randomness of phase noise at the IRS into account. In this work, we aim at filling up this gap by considering a general IRS-assisted multi-user (MU) multiple-input single-output (MISO) system with imperfect channel state information (CSI) and correlated Rayleigh fading. In parallel, we present a  general computationally efficient methodology for IRS reflecting beamforming (RB) optimization. Specifically, we introduce an advantageous channel estimation (CE) method for such systems accounting for the HWIs. Moreover, we derive the uplink achievable spectral efficiency (SE) with maximal-ratio combining (MRC) receiver, displaying three significant advantages being: 1) its closed-form expression, 2) its dependence only on large-scale statistics, and 3) its low training overhead. Notably, by exploiting the first two benefits, we achieve to perform optimization with respect to the RB that can take place only per several coherence intervals, and thus, reduces significantly the computational cost compared to other methods based on instantaneous CSI which require frequent phase optimization. Among the  insightful observations, we highlight  that the unrealistic assumption of  uncorrelated Rayleigh fading does not allow optimization of the SE, which makes the application of an IRS ineffective. Also, in the case that the phase drifts, describing the distortion of the phases in the RBM, are uniformly distributed, the presence of an IRS provides no advantage.  The analytical results outperform previous works and are verified by Monte-Carlo (MC) simulations.
	\end{abstract}
	\begin{keywords}
		Intelligent reflecting surface (IRS), transceiver hardware impairments, channel estimation, achievable spectral efficiency, beyond 5G networks.
	\end{keywords}
	
	\section{Introduction}
	In the last decade, a variety of wireless technological advances, including millimeter-wave (mmWave) communication and massive multiple-input multiple-output (mMIMO) systems, have been proposed to achieve a $ 1000 $-fold capacity increase and ubiquitous wireless connectivity among a large number of devices \cite{Boccardi2014}. Unfortunately, these technologies face practical limitations in terms of excessive energy consumption and hardware cost as well as no guaranteed quality of service (QoS) in harsh propagation environments. For instance, mmWave communications exhibit high penetration/path loss while requiring expensive and energy-consuming transceivers. Similarly, mMIMO systems manifest low performance in poor scattering conditions while the large  number of active elements might render the energy usage prohibitive. Moreover, new challenging use cases will emerge with possibly similar shortcomings. As a result,  future networks require  radical  paradigm shifts towards their energy sustainability, e.g.,  control to some extent over the propagation environment.
	
	A disruptive technology, covering this gap, has emerged under the label intelligent reflecting surfaces (IRSs) or  reconfigurable intelligent surfaces (RISs). An IRS consists of a meta-surface including a large number of reconfigurable passive elements that are able to function independently by inducing certain phase shifts on the impinging waves \cite{Basar2019}. The smart adjustment of the phase shifts is managed by an attached controller and allows the coherent addition of the reflected signals to boost the desired signal at the receiver.
	The IRS design and applications have attracted a lot of significant research interest \cite{Basar2019,Wu2019a,Pan2020,Bjoernson2019b,Kammoun2020, Elbir2020,Guo2020,Chen2019}. For example, in \cite{Wu2019a}, the downlink of an IRS-assisted multi-user (MU) multiple-input single-output (MISO) communication system was studied by jointly optimizing the precoding and reflecting beamforming matrices (RBMs), in order to minimize the transmit power at the base station (BS) with signal-to-interference-plus-noise ratio (SINR) constraints. In addition, the sum-rate was maximized in \cite{Pan2020} subject to a transmit power constraint. 
	Furthermore, the outperformance of the IRS with respect to the decode-and-forward (DF) relay was presented in \cite{Bjoernson2019b}. Also, the authors in \cite{Kammoun2020} maximized the minimum user rate in the large number of antennas regime. Despite the fundamental design issues, applications regarding IRSs have started to emerge such as the maximization of the minimum secrecy rate for physical layer security \cite{Chen2019}. 
	
	  In general, there are two approaches for phases optimization as the literature reveals.  The first  method is based on statistical channel state information (CSI) \cite{Jia2020,Zhao2020,Zhi2020,Kammoun2020,Papazafeiropoulos2021,VanChien2021,Papazafeiropoulos2021a} and the second method is based on instantaneous CSI \cite{Wu2019a,Pan2020}. According to the second method, the phases are optimized at every coherence
		interval since the corresponding expressions depend on small-scale channel fading. On the contrary, the first method includes expressions that depend on the large-scale statistics, which change at every several coherence intervals. Thus, the 	significance of the first method is noteworthy since it reduces considerably the signal overhead 	which can be prohibitive in the case of a large number of reflecting elements at the IRS. Moreover,
		it results in lower computational complexity.  Especially, based on these observations, in high-speed  scenarios with fast time-varying channels,  it is more practical to design and adjust the IRS phase shifts  according to the statistical
		CSI   while the tuning of the IRS 	parameters, based on instantaneous CSI, would be more challenging since they would have to
		be updated more frequently. Furthermore, although the IRS does not consume ideally transmit 	power, its smart controller is power-consuming and its continuous overloading with operations in 	the case of instantaneous CSI would not be energy efficient.
	
	Although most existing works with IRS-aided systems have relied on the knowledge of perfect CSI, this is a highly unrealistic assumption. In practice, systems have imperfect CSI. Especially, their passive elements make them energy efficient, but, contrary to conventional systems, this interesting feature does not enable them to accomplish the channel estimation (CE) task  by transmission/reception of pilot symbols. Hence, it is of paramount importance to take into account the CE before arriving at realistic conclusions. Among the fundamental works \cite{Zheng2019,Mishra2019,He2019,Elbir2020,Nadeem2020}, the authors in \cite{Mishra2019} proposed an ON/OFF channel	estimation scheme that obtains one-by-one least squares (LS) estimates of all IRS-assisted channels for a single-user MISO system. Moving to MUs systems, finding more applications in contemporary systems, the authors in \cite{He2019} exploited the sparsity of the channel and formulated a sparse channel matrix recovery problem for CE. In \cite{Nadeem2020}, the authors extended the model in \cite{Mishra2019} by assuming all IRS elements to be active during training while a number of sub-phases equal at least to the number of IRS elements is considered. This method provides better CE as the number of sub-phases increases, but the achievable rate worsens since the data transmission phase takes a smaller fraction of the coherence time due to excessive training overhead. Another drawback is that this method provides the estimates of the channels of the individual IRS elements while the covariance of the channel vector from all IRS elements to a specific user equipment (UE) is unknown. 

	On the other hand, prior literature of IRS-assisted systems has mostly assumed perfect hardware while practical applications are affected by unavoidable transceiver hardware impairments (T-HWIs) such as the in-phase/quadrature-phase (I/Q)-imbalance~\cite{Qi2010}, the quantization noise in the analog-to-digital converters (ADCs), and the oscillator phase noise (PN)~\cite{Papazafeiropoulos2016,Papazafeiropoulos2017a}. Even if mitigation/compensation algorithms exist, T-HWIs cannot be completely removed \cite{Schenk2008,Bjoernson2017}. Basically, T-HWIs are divided into two main categories being the additive and multiplicative T-HWIs. 
	In this work, we focus on the impact of the additive T-HWIs, while the study of multiplicative T-HWIs will be the topic for future work. In this direction, an examination of existing works with HWIs in the area of IRS-assisted systems shows that relevant studies are in their infancy \cite{Li2020,Xing2020,Qian2020,Liu2020,Shen2020,Zhou2020a}. In \cite{Li2020}, only single-antennas nodes were considered, and the phase errors were assumed known (deterministic). In addition, the phase noise, induced by an IRS and, henceforth called IRS-HWIs, has been studied in \cite{Xing2020,Qian2020} in the case of perfect CSI, but no expectation was taken over the phase noise.  Note that this phase noise, coming from the finite precision configuration of the phase drifts, is irrelevant with the phase noise coming from imperfect signal generation in  local oscillators in standard antenna systems.  Moreover, in \cite{Liu2020}, despite its randomness, again, no averaging of the phase noise was applied. Furthermore, only a single UE communication has been considered and only upper bounds on the channel capacities have been studied, which are not also obtained in closed forms. The authors in \cite{Shen2020} provided the beamforming optimization by accounting for T-HWIs in a single-user setting, and in \cite{Zhou2020a}, the secrecy rate was derived. Notably, only a few works in IRS-aided systems have assumed correlated Rayleigh fading despite that this is normally the case in multi-antenna next-generation systems. Apart from that, most works perform RBM optimization with a high computational cost in every coherence interval.

	\subsection{Contribution}
	The previous observations motivate the topic of this work, which is the design/study of a general IRS-assisted MU-MISO system with imperfect CSI and HWIs at both the IRS and the transceiver while performing robust optimization. Notably, the introduction of HWIs increases the complexity/difficulty and demands substantial manipulations during the analysis of IRS-assisted systems. The main contributions are summarized as follows:
	\begin{itemize}
		\item Contrary to \cite{Xing2020}, we have assumed multiple antennas at the BS and multiple UEs as well as imperfect CSI. Also, compared to \cite{Liu2020}, we have considered correlated Rayleigh fading, multiple UEs, and closed-form lower bounds, which are more practical than any upper bounds. Both references have not addressed properly the impact of phase noise while, in \cite{Li2020}, only deterministic phase noise was assumed. In \cite{Shen2020} and \cite{Zhou2020a}, only the impact of T-HWIs was studied while only a single destination and perfect CSI were assumed.    Notably, as far as the authors are aware, our work is the only one accounting for the randomness of the phase noise.
		\item 		 Many previous works have assumed that the optimization of the RBM should take place at every coherence interval since the corresponding expressions depend on small-scale channel fading, while our proposed results, being dependent only on large-scale statistics are suggested to be optimized at every several coherence intervals. Thus, their significance is noteworthy since they reduce considerably the signal overhead which can be prohibitive in the case of  a large number of reflecting elements at the IRS.
		\item We perform CE by means of linear minimum mean square error (LMMSE) while HWIs are taken into account. In parallel, we have assumed correlated Rayleigh fading. Our method provides analytical tractable expressions with low overhead compared to previous works.\footnote{Works such as \cite{He2019, Elbir2020} do not provide analytical expressions. Also, 
			previous CE methods with correlated fading do not allow the derivation of an optimizable achievable spectral efficiency (SE) being dependent on the RBM. In \cite{Nadeem2020}, only the estimated individual channels between each IRS element and each UE are obtained while the inter-element correlation is unknown.}
		\item We derive the uplink achievable spectral efficiency (SE) (lower bound) of an IRS-assisted MU-MISO system with MRC, imperfect CSI, and HWIs in a closed-form dependent only on   large-scale statistics (covariances).
		\item We optimize the achievable sum SE with respect to the RBM. As mentioned, contrary to other works that depend on small-scale statistics (e.g., see \cite{Guo2020} where the stochastic successive convex approximation technique has been performed), our optimization can be performed quite efficiently by the project gradient ascent at every several coherence intervals since both the sum SE and the proposed algorithm require only the large-scale statistics.   Notably, contrary to existing  works such as \cite{Jia2020,Zhao2020}, based on statistical CSI, we achieve to provide the SE and the phases optimization in closed-form. 
		\item We shed light on the degradation of the uplink sum SE of an IRS-aided MU-MISO system   due to the presence of imperfect CSI, HWIs, and correlated fading. For example, we thoroughly examine how the probability density function (PDF) of the phase noise at the IRS and the severity of the T-HWIs affect the performance of  IRS-aided systems.
	\end{itemize}
	
	\subsection{Paper Outline} 
	The remainder of this paper is organized as follows. Section~\ref{System} presents the system model of an IRS-assisted MU-MISO system with correlated Rayleigh fading and HWIs. Section~\ref{ChannelEstimation} provides the CE. Section~\ref{PerformanceAnalysis} presents the uplink sum SE and the optimization concerning the IRS RBM.
	The numerical results are placed in Section~\ref{Numerical}, and Section~\ref{Conclusion} concludes the paper.
	
	\subsection{Notation}Vectors and matrices are denoted by boldface lower and upper case symbols, respectively. The notations $(\cdot)^\T$, $(\cdot)^\H$, and $\tr\!\left( {\cdot} \right)$ represent the transpose, Hermitian transpose, and trace operators, respectively. The expectation operator is denoted by $\EE\left[\cdot\right]$ (or $\EE_{x}\left[\cdot\right]$ to denote expectation with respect to $ x $) while $ \diag\left(\ba \right) $ represents an $ n\times n $ diagonal matrix with diagonal elements being the elements of vector $ \ba $. In the case of a matrix $ \bA $, $\diag\left(\bA\right) $ denotes a diagonal matrix with elements corresponding to the diagonal elements of $ \bA $. Also, $ \arg\left(\cdot\right) $ and $ \circ $ denote the argument function and the Hadamard product, respectively. Finally, $\bb \sim \cC\cN{(\b0,\mathbf{\Sigma})}$ represents a circularly symmetric complex Gaussian vector with {zero mean} and covariance matrix $\mathbf{\Sigma}$.
	
	\section{System Model}\label{System}
	We consider an IRS-aided MU-MISO system  as depicted in Fig. \ref{Fig10}. In particular, a  BS, equipped with $ M $ antennas, serves $K $ single-antenna UEs by means of one IRS consisting of $ N $ passive reflecting elements introducing phase shifts onto the incoming signal waves. The phase-shifts are adjusted by a controller exchanging information with the BS through a backhaul link. Reasonably, the IRS is deployed in the line-of-sight (LoS) of the BS  by assuming that  both the BS and IRS are deployed at high altitude and their locations are fixed. Moreover, the proposed model assumes direct links between the BS and the UEs, but these could be neglected in certain scenarios.  For example, in mmWave transmission, suggested by 5G and beyond systems, high penetration losses and resultant signal blockages do not allow the presence of an LoS component \cite{Rappaport2019}.
	\begin{figure}[!h]
		\begin{center}
			\includegraphics[width=0.9\linewidth]{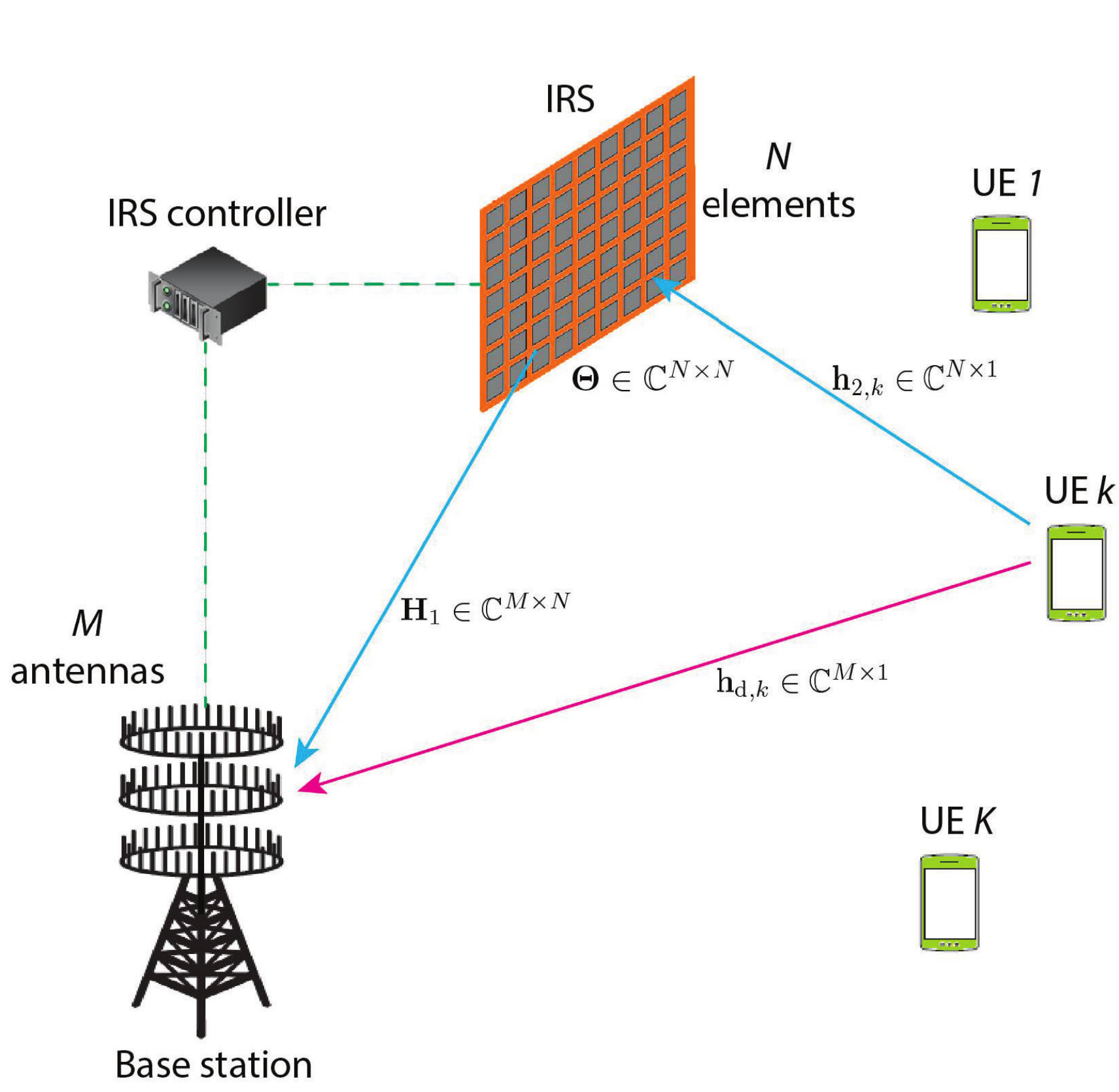}
			\caption{\footnotesize{ An IRS-assisted uplink MU-MISO communication system with $ M $ BS antennas, $ N $ IRS elements, and $ K $ UEs.  }}
			\label{Fig10}
		\end{center}
	\end{figure}
	
	\subsection{Channel Model}\label{ChannelModel} 
	We assume a time-varying narrowband channel, divided into coherence blocks, where each block has a duration $\tau_{\mathrm{c}}$ channel uses.\footnote{ The extension to the wideband case could follow the lines of \cite{Zheng2019,Yang2020} and could be the topic of future work.} Note that $\tau_{\mathrm{c}}= B_{\mathrm{c}}T_{\mathrm{c}}$ with $ B_{\mathrm{c}} $ and $ T_{\mathrm{c}} $ being the coherence bandwidth and the coherence time in $\mathrm{Hz}$ and $\mathrm{s}$, respectively. Especially, we employ the standard time-division-duplex (TDD) protocol, where each block accounts for $\tau$ channel uses for the uplink training phase and $\tau_{\mathrm{u}}=\tau_{\mathrm{c}}-\tau$ channel uses for the uplink data transmission phase.
	
	Let $ \bh_{\mathrm{d},k} \in \mathbb{C}^{M \times 1} $, $ \bH_{1}=[\bh_{1,1}\ldots,\bh_{1,N} ] \in \mathbb{C}^{M \times N}$, and $ \bh_{2,k} \in \mathbb{C}^{N \times 1}$ be the direct channel between the BS and UE $ k $, the LoS channel between the BS and the IRS with $ \bh_{1,i} $ for $ i=1,\ldots,N $ being its column vectors, and the channel between the IRS and UE $ k $. The subscripts $ 1$ and $ 2$ correspond to the BS-IRS and IRS-UE $ k $ links, respectively. Although the majority of existing works, e.g., \cite{Wu2019a,Pan2020}, assumed independent Rayleigh model, we account for spatial correlation,
	which appears in practice and affects the performance \cite{Kammoun2020}. Hence, $ \bh_{\mathrm{d},k} $ and $ \bh_{2,k} $ are described in terms of correlated Rayleigh fading distributions as
	\begin{align}
		\bh_{2,k}&=\sqrt{\beta_{2,k}}\bR_{\mathrm{IRS},k}^{1/2}\bz_{k},\\
		\bh_{\mathrm{d},k}&=\sqrt{\beta_{\mathrm{d},k}}\bR_{\mathrm{BS},k}^{1/2}\bz_{\mathrm{d},k},
	\end{align}
	where $ \bR_{\mathrm{IRS},k} \in \mathbb{C}^{N \times N} $ and $ \bR_{\mathrm{BS},k} \in \mathbb{C}^{M \times M} $ describe the deterministic Hermitian-symmetric positive semi-definite correlation
	matrices at the IRS and the BS respectively with $ \tr\left(\bR_{\mathrm{IRS},k} \right)=N $ and $ \tr\left(\bR_{\mathrm{BS},k} \right)=M $. Notably, the correlation matrices $ \bR_{\mathrm{IRS},k} $ and $ \bR_{\mathrm{BS},k}~\forall k$ are assumed to be known by the network. They can be obtained through the existing estimation methods (see e.g., \cite{Neumann2018}). Given that correlation models for IRS using meta-surfaces are not known, we adopt the correlation model in \cite{Kammoun2015} for conventional antenna arrays.\footnote{While writing this work,  the authors in \cite{Bjoernson2020} presented a more suitable correlation for IRSs. Its thorough study is the topic of ongoing research.} Also, $ \beta_{2,k} $ and $ \beta_{\mathrm{d},k} $ denote the path-loss  of the IRS-UE $ k $ and BS-UE $ k $ links, respectively. Furthermore, $ \bz_{k}\sim \mathcal{CN}\left(\b0,\Id_{N}\right) $ and $ \bz_{\mathrm{d},k} \sim \mathcal{CN}\left(\b0,\Id_{N}\right) $ describe the corresponding fast-fading vectors.
	
	By taking into account that the IRS is designed to be installed in a location providing an LoS channel with the BS, the channel matrix $ \bH_{1} $ will likely have rank one, which results in performance gains only when $ K=1 $ \cite{Kammoun2020}. However, an MU scenario requires $ \mathrm{rank}\left(\bH_{1}\right)\ge K $. The higher rank could be achieved by placing the IRS close to the BS or by 
	deterministic scattering between the BS and the IRS. The high rank LoS channel $ \bH_{1} $ can be obtained as
	\begin{align}
		[\bH_{1}]_{m,n}=\sqrt{\beta_{1}} \exp\big(j \frac{2 \pi }{\lambda}\left(m-1\right)d_{\mathrm{BS}}\sin \theta_{1,n}\sin \phi_{1,n}\nn\\
		+\left(n-1\right)d_{\mathrm{IRS}}\sin \theta_{2,m}\sin \phi_{2,m}\big)\!,
	\end{align}
	where $ \lambda $ is the carrier wavelength, $ \beta_{1} $ is the path-loss between the BS and IRS while $ d_{\mathrm{BS}} $ and $ d_{\mathrm{IRS}} $ are the inter-antenna separation at the BS and inter-element separation at the IRS, respectively \cite{Nadeem2020}. Also, $ \theta_{1,n} $, $ \phi_{1,n} $ describe the elevation and azimuth LoS angles of departure (AoD) at the BS with respect to IRS element $ n $, and $ \theta_{2,n} $, $ \phi_{2,n} $ describe the elevation and azimuth LoS angles of arrival (AoA) at the IRS.  In practice, $ d_{\mathrm{BS}} $ and $ d_{\mathrm{IRS}} $ are known by construction while  the angles, depending only on the locations, can be calculated when the locations are given according to \cite{Hu2020}.  It is worthwhile to mention that the estimation of the correlation matrices could also be obtained similar to $ \bH_1 $ since the dependence of their expressions on the distances and the angles is similar. 
	
	\subsection{Ideal Uplink Signal Model}\label{SignalModel}
	The ideal received complex baseband signal vector by the BS is written as
	\begin{align}
		\by= & \sum_{k =1}^{K} \left(\bh_{\mathrm{d},k}+\bH_{1}\bTheta \bh_{2,k}\right)x_{k}+ 
		\bw,\label{eq:Ypt1}
	\end{align}
	where $ \bw \sim \cC\cN\left(\b0,\sigma^2\Id_{M}\right) $ is the additive white Gaussian noise (AWGN) vector at the BS while $\bTheta=\mathrm{diag}\left( \al_{1}e^{j \theta_{1}}, \ldots, \al_{N}e^{j \theta_{N}} \right)\in\mathbb{C}^{N\times N}$ is the RBM being diagonal and representing the response of the $ N $ elements with $ \theta_{n} \in [0,2\pi]$ and $ \al_{n}\in [0,1] $ denoting the phase  and amplitude coefficient for element $ n $, respectively. As commonly assumed due to recent advances towards lossless metasurfaces \cite{Epstein2016,Badloe2017}, we set $ \al_{n}=1~ \forall n$, i.e., we assume maximum signal reflection. For the sake of exposition, given the RBM, we denote the overall channel vector $ \bh_{k}=\bh_{\mathrm{d},k}+ \bH_{1}\bTheta \bh_{2,k} $, distributed as $ \bh_{k}\sim \cC\cN\left( 0, \bR_{k} \right) $, where $ \bR_{k}= \beta_{\mathrm{d},k}\bR_{\mathrm{BS},k}+ \beta_{2,k}\bH_{1} \bTheta {\bR}_{\mathrm{IRS},k}\bTheta^{\H}\bH_{1}^{\H}$.
	
	\subsection{Hardware Impairments}\label{HardwareImpairments} 
	In this work, we consider two distinct types of HWIs  in an IRS-assisted system: 1) the aggregate additive HWIs at the transceiver, and 2) the HWIs emerged from the passive elements of the IRS. Henceforth, we  denote them T-HWIs and IRS-HWIs, respectively.
	\subsubsection{T-HWIs}
	The majority of papers in the IRS literature have relied on the unrealistic assumption of ideal transceiver hardware.  Especially, next-generation antenna deployments with a large number of antennas such as a massive MIMO systems assisted by an IRS should be implemented with cheap hardware, in order to be cost-efficient as the number of antennas increases. However, cheaper hardware results in lower quality with more severe HWIs that are more power consuming and degrade further the system performance. Instead, we take into account  the additive distortions at both the transmitter and the receiver being Gaussian distributed with average powers proportional to the average transmit and received signals, respectively \cite{Schenk2008}.  We would like to mention that although Gaussian modeling for the HWIs can be assumed  rudimentary, it is used widely because its tractability allows extracting primary conclusions, e.g., see the recent works \cite{Zhang2020,Papazafeiropoulos2021b}. The Gaussianity results by means of the aggregate contribution of many impairments. Especially, let $ p_{k}=\EE \{|x_{k}|^{2}\}$ be the transmit power from UE $ k $ having transmit signal $ x_{k} $ and $ \bh_{k} \in \mathbb{C}^{M\times 1}$ be the channel vector of this UE. The additive transceiver distortions during the uplink are described in terms of conditional distributions with respect to the channel realizations as
	\begin{align}
		\delta_{\mathrm{t},k}&\sim \cC\cN\left( 0, \Lambda_{k} \right),\label{eta_tU} \\
		\deltav_{\mathrm{r}}&\sim \cC\cN \left( \b0,\bm \Upsilon \right)\label{eta_rU},
	\end{align}
	where $ \Lambda_{k}= \kappa_{\mathrm{UE}}p_{k}$ and 
	$\bm \Upsilon =\kappa_{\mathrm{BS}}\sum_{i=1}^{K}p_{i}$   $ \mathrm{diag}\left( |h_{i,1}|^{2},\ldots,|h_{i,M}|^{2} \right) $ with $ \bh_{i}=\left[h_{i,1},\ldots,h_{i,M}\right]^{\T} $. The variance $ \bm \Upsilon $ can also be written as
	$\bm \Upsilon =\kappa_{\mathrm{BS}}\sum_{i=1}^{K}p_{i} \Id_{M}\circ \bh_{i} \bh_{i}^{\H}$. The proportionality parameters $\kappa_{\mathrm{UE}}$ and $\kappa_{\mathrm{BS}}$ express the severity of the residual impairments at the transmitter and receiver side, and are met in applications in terms of the error vector magnitude (EVM)~\cite{Holma2011}.  For example, the EVM at the BS is defined as 
		\begin{align}
			\mathrm{EVM}_{\mathrm{BS}}=\sqrt{\frac{\EE[\|\deltav_{\mathrm{r}}\|^{2}_{2}]}{\EE[\|\by\|_{2}^{2}]}}=\sqrt{\frac{\mathrm{tr}\left(\bm \Upsilon \right)}{\EE[\|\by\|_{2}^{2}]}}=		\sqrt{\kappa_{\mathrm{BS}}}, 
		\end{align}
		where the expectations take place for a specific channel realization. Here, for the sake of simplicity, we assume that the parameter $ \kappa_{\mathrm{UE}} $ is identical for all UEs.
	\subsubsection{IRS-HWIs} Taking into account that the reflection phases of the IRS passive elements cannot be configured with infinite precision, they can be modeled in terms of phase errors \cite{Badiu2019}.\footnote{We focus on the main IRS impairment, being the imperfection of phases configuration by assuming unity reflection amplitude, i.e., full signal reflection as in prior works e.g., see \cite{Basar2019,Wu2019a,Pan2020}. However, recently, it was suggested that the reflection amplitude can be  phase-dependent due to hardware limitations \cite{Abeywickrama2020}, which requires a separate analysis and is left for future work.} In particular, IRS-HWIs are mathematically described by means of a random diagonal phase error matrix consisting of $ N $ random phase errors, i.e., $ \widetilde{\bTheta} =\diag\left( e^{j \tilde{\theta}_{1}}, \ldots, e^{j \tilde{\theta}_{N}} \right)\in\mathbb{C}^{N\times N}$, where $ \tilde{\theta}_{i}, i=1,\ldots,N $ are the random phase errors being i.i.d. randomly distributed in $ [-\pi, \pi) $ according to a certain circular distribution. Also, we assume that the PDF of $\tilde{ \theta}_{i} $ is symmetric and its mean direction  is zero, i.e., $ \arg\left(\EE[\mathrm{e}^{j \tilde{\theta}_{i}}]\right)=0 $ \cite{Badiu2019}. The most widely used PDFs, being able to describe the phase noise are the uniform and the Von Mises distributions \cite{Badiu2019}, where:
	\begin{itemize}
		\item the uniform distribution expresses completely lack of knowledge (random reflection) and its characteristic function (CF) denoted by $ m $ is $0 $,
		\item the Von Mises distribution with a zero-mean and concentration parameter $ \kappa_{\tilde{\theta}} $, 	capturing the accuracy of the estimation, has a CF $ m= \frac{\mathrm{I}_{1}\!\left(\kappa_{\tilde{\theta}}\right)}{\mathrm{I}_{0}\!\left(\kappa_{\tilde{\theta}}\right)}$ with $ \mathrm{I}_{p}\!\left(\kappa_{\tilde{\theta}}\right)$ being the modified Bessel function of the first kind and order
		$ p $.
	\end{itemize} 
	\subsection{Realistic Uplink Signal Model with HWIs}\label{SignalModel1}
	Overall, the realistic received signal vector by the BS after having incorporated both the T-HWIs and IRS-HWIs in \eqref{eq:Ypt1} is given by
	\begin{align}
		\by= & \sum_{k =1}^{K} \left(\bh_{\mathrm{d},k}+\bH_{1}\bTheta\widetilde{\bTheta} \bh_{2,k}\right)\left(x_{k} +\delta_{\mathrm{t},k}\right)+\deltav_{\mathrm{r}}+
		\bw\label{eq:Ypt}.
	\end{align}
	
	Now, given the RBM, the overall channel vector is $ \bh_{k}=\bh_{\mathrm{d},k}+ \bH_{1}\bTheta\widetilde{\bTheta} \bh_{2,k} $, distributed as $ \bh_{k}\sim \cC\cN\left( 0, \bR_{k} \right) $, where $ \bR_{k}= \beta_{\mathrm{d},k}\bR_{\mathrm{BS},k}+ \beta_{2,k}\bH_{1} \bTheta \widetilde{\bR}_{\mathrm{IRS},k}\bTheta^{\H}\bH_{1}^{\H}$ with $ \widetilde{\bR}_{\mathrm{IRS},k} $ given by
	\begin{align}
		&\widetilde{\bR}_{\mathrm{IRS},k}=\EE[ \widetilde{\bTheta}\bR_{\mathrm{IRS},k}\widetilde{\bTheta}^{\H}]\\
		&=\!\!\begin{bmatrix}\!
			r_{11} &\!\!\!\! r_{12} \EE_{\tilde{\theta}}[e^{j \tilde{\theta}_{1}-j \tilde{\theta}_{2}}]&\!\!\!\! \dots&\!\!\!\!r_{1N} \EE_{\tilde{\theta}}[e^{j \tilde{\theta}_{1}-j \tilde{\theta}_{N}}]\\
			r_{21} \EE_{\tilde{\theta}}[e^{j \tilde{\theta}_{2}-j \tilde{\theta}_{1}} ]&\!\!\!\! r_{22} & \!\!\!\!\dots&\!\!\!\!r_{2N} \EE_{\tilde{\theta}}[e^{j \tilde{\theta}_{2}-j \tilde{\theta}_{N}}]\\
			\vdots&\!\!\!\!\vdots&\!\!\!\!\ddots&\!\!\!\!\vdots\\
			r_{N1} \EE_{\tilde{\theta}}[e^{j \tilde{\theta}_{N}-j \tilde{\theta}_{1}} ]&\!\!\!\! r_{N2} \EE_{\tilde{\theta}}[e^{j \tilde{\theta}_{N}-j \tilde{\theta}_{2}} ] &\!\!\!\! \dots&\!\!\!\!r_{NN}\!\end{bmatrix} \label{cor1}\\
		&=\!\begin{bmatrix}
			r_{11} & m^{2} r_{12} & \dots&m^{2} r_{1N}\\
			m^{2} r_{21}& r_{22} & \dots&m^{2} r_{2N} \\
			\vdots&\vdots&\ddots&\vdots\\
			m^{2} r_{N1} &m^{2} r_{N2} & \dots&r_{NN}\end{bmatrix} \label{cor2}\\
		&=m^{2}\bR_{\mathrm{IRS},k}+\left(1-m^{2}\right)\diag\left(\bR_{\mathrm{IRS},k}\right)\\
		&=m^{2}\bR_{\mathrm{IRS},k}+\left(1-m^{2}\right)\Id_{N}.\label{cor3}
	\end{align}
	In \eqref{cor1}, $ r_{ij} $ is the $ ij $th element of the correlation matrix $ \bR_{\mathrm{IRS},k} $. Also, in \eqref{cor2}, we have exploited that $ \tilde{\theta}_{i} $ are i.i.d. distributed with a symmetric PDF while we have substituted with $ m $ the corresponding CF. The next equation is written in a compact form in terms of $ \bR_{\mathrm{IRS},k} $. In \eqref{cor3}, we have accounted for that in correlated Rayleigh fading, we have $ r_{ii}=1, ~\forall i $. Notably, this is a very useful equation describing the dependence on the IRS-HWIs, i.e., the phase noises from the IRS elements.
	\begin{remark}\label{rem1}
		In the case of the uniform distribution, where the characteristic function is zero ($ m=0 $), we obtain $ \widetilde{\bR}_{\mathrm{IRS},k}=\Id_{N}$, i.e., there is no dependence on the phase noises. In such case, the overall covariance becomes $ \bR_{k} =\beta_{\mathrm{d},k}\bR_{\mathrm{BS},k}+\beta_{2,k}\bH_{1}\bH_{1}^{\H}$, which obviously has no dependence on the RBM and cannot be optimized. Hence, the IRS cannot be exploited. Moreover, no knowledge of $ \bR_{\mathrm{IRS},k} $ is required at the BS.  However, even in this case, the IRS still contributes with an additional signal to the receiver. Although it is expected to be weak, it is beneficial. Especially, when there is no direct signal.  Note that  these cases are very difficult to appear in practice. Specifically, regarding the independent Rayleigh assumption, it is uncommon to appear as mentioned in \cite{Bjoernson2020} while always there will be some knowledge and control in the reflection at the IRS, which means that the uniform distribution is not meaningful in practice (see Sec. II.C.2). On the contrary, any other circular PDF for the description of the phase errors allows studying the impact of these errors, and mostly, taking advantage of the IRS.
	\end{remark}

	\section{CE with HWIs}\label{ChannelEstimation}
	In practice, a BS does not have perfect CSI but estimates its channel by a TDD operation including an uplink training phase with pilot symbols \cite{Bjoernson2017}. Differently to conventional MISO with/without relay systems, the IRS implemented by means of passive elements, is not able to send pilots to the BS for CE or process the received pilot symbols from the UEs to obtain the corresponding estimated channels. Contrary to existing works providing separately the estimated direct and cascaded channels \cite{Nadeem2020}, we provide the estimate of the overall channel. For example, compared to \cite{Mishra2019} and \cite{Nadeem2020}, we perform the CE in a single phase instead of $ N+1 $ phases. In particular, the former is known as ON/OFF channel
	estimation and addresses a single UE setting while the latter assumes multiple UEs. Hence, the achievable SE in our case is higher since the pre-log factor in the SE is lower (lower training overhead). Also, we achieve to derive the covariance of the estimated cascaded channel vector while, in \cite{Nadeem2020}, it is assumed unknown.
	
	The CE protocol assumes that the total uplink training phase has a duration of $ \tau $ sec. Let the UEs transmit orthogonal pilot sequences. Especially, we denote by $\bx_{p,k}=[x_{p,k,1}, \ldots, x_{p,k,\tau}]^{\T}\in \mathbb{C}^{\tau\times 1} $ the pilot sequence of UE $ k $ with $ \bx_{p,k}^{\H}\bx_{p,l}=0~\forall k\ne l$ and $ \bx_{p,k}^{\H}\bx_{p,k}= \tau P$ joules, where $ P =|x_{p,k,i}|^{2} ,~\forall k,i$ is the common average transmit power per UE during the training phase.
	
	The received signal at the BS with T-HWIs and IRS-HWIs during the training period is given by
	\begin{align}
		\bY^{\tr}=\sum_{i=1}^{K}\bh_{i}\!\left(\bx_{\mathrm{p},i}^{\H} +\deltav_{\mathrm{t},i}^{\H} \right)+\bDelta_{\mathrm{r}}+
		\bW^{\tr},\label{train1}
	\end{align}
	where $ \deltav_{\mathrm{t},i} \in \mathbb{C}^{M \times 1} \sim \mathcal{CN}\left(\b0,\kappa_{\mathrm{UE}} P\Id_{\tau}\right) $ is the $ \tau \times 1$ additive transmit HWI vector while $ \bDelta_{\mathrm{r}} \in \mathbb{C}^{M \times \tau} $ is the additive receive HWI matrix where each column is distributed as $ \deltav_{\mathrm{r}} \in \mathbb{C}^{M \times 1} \sim \mathcal{CN}\left(\b0,\bm \Upsilon \right)$ with $ \bm \Upsilon =\kappa_{\mathrm{BS}} P\sum_{i=1}^{K} \Id_{M}\circ \bh_{i} \bh_{i}^{\H}$. Note that the phase noise is hidden inside the expression of $ \bh_{k} $. In addition, $ \bW^{\tr} \in \mathbb{C}^{M \times \tau} $ is the AWGN matrix at the BS with independent columns, each one distributed as $ \mathcal{CN}\left(\b0,\sigma^2\Id_{M}\right)$.
	
	The received training signal at the BS, given by \eqref{train1}, is multiplied by the transmitted training sequence from UE $ k $ to eliminate the interference caused by other UEs, and obtain
	\begin{align}
		\br_{k}=\bh_{k}+\sum_{i=1}^{K}\frac{\tilde{\delta}_{\mathrm{t},i}}{ \tau P}\bh_{i} +\frac{\tilde{\deltav}_{\mathrm{r}}+\bw_{k}}{ \tau P},\label{train2}
	\end{align}
	where $ \tilde{\delta}_{\mathrm{t},i}=\deltav_{\mathrm{t},i}^{\H}\bx_{\mathrm{p},k} $, $\tilde{\deltav}_{\mathrm{r}}=\bDelta_{\mathrm{r}}\bx_{\mathrm{p},k} $, and $ \bw_{k}=\bW^{\tr} \bx_{\mathrm{p},k}$. 

	\begin{proposition}\label{PropositionDirectChannel}
		The LMMSE estimate of the overall channel $ \bh_{k} $ is given by
		\begin{align}
			\hat{\bh}_{k}=\bR_{k}\bQ_{k} \br_{k},\label{estim1}
		\end{align}
		where $ \bQ_{k}\!=\! \left(\!\bR_{k}\!+ \!\frac{\kappa_{\mathrm{UE}}}{\tau }\!\sum_{i=1}^{K} \!\bR_{i}\!+\! \frac{\kappa_{\mathrm{BS}}}{\tau}\!\sum_{i=1}^{K}\!\Id_{M}\!\circ\! \bR_{i} \!+\!\frac{\sigma^2}{ \tau P }\Id_{M}\!\right)^{\!-1}$, and $ \br_{k}$ is the noisy observation of the effective overall channel from UE $k$ given by \eqref{train2}.
	\end{proposition}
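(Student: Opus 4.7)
The plan is to obtain $\hat{\bh}_k$ directly from the orthogonality principle for zero-mean vectors, namely
\begin{align}
\hat{\bh}_k = \mathrm{Cov}(\bh_k,\br_k)\bigl[\mathrm{Cov}(\br_k)\bigr]^{-1}\br_k,
\end{align}
and then to identify the two covariances with $\bR_k$ and $\bQ_k^{-1}$, respectively. Since all terms in \eqref{train2} are zero mean, this formula is valid even though the products $\bh_i\tilde{\delta}_{\mathrm{t},i}$ and $\tilde{\deltav}_{\mathrm{r}}$ (whose conditional covariance $\bm\Upsilon$ depends on $\bh_i$) are not jointly Gaussian with $\bh_k$. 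I would open the proof by stating this fact, to make clear that no Gaussianity beyond second-order moments is being invoked.

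Next, I would compute the cross-covariance $\mathrm{Cov}(\bh_k,\br_k)$. Using the mutual independence of $\bh_k$ from $\{\deltav_{\mathrm{t},i}\}$, $\bDelta_{\mathrm{r}}$, and $\bW^{\tr}$, together with the conditional-mean-zero property $\EE[\tilde{\delta}_{\mathrm{t},i}\mid\bh_i]=0$ and $\EE[\tilde{\deltav}_{\mathrm{r}}\mid\{\bh_i\}]=\b0$, every cross term in \eqref{train2} vanishes in expectation, leaving $\mathrm{Cov}(\bh_k,\br_k)=\EE[\bh_k\bh_k^{\H}]=\bR_k$.

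Then I would compute $\mathrm{Cov}(\br_k)$ by summing four independent contributions. The channel term contributes $\bR_k$. For the transmit-distortion term, I exploit $\tilde{\delta}_{\mathrm{t},i}=\deltav_{\mathrm{t},i}^{\H}\bx_{\mathrm{p},k}$ with $\|\bx_{\mathrm{p},k}\|^2=\tau P$, so $\mathrm{Var}(\tilde{\delta}_{\mathrm{t},i})=\kappa_{\mathrm{UE}}\tau P^2$; independence of $\tilde{\delta}_{\mathrm{t},i}$ from $\bh_i$ then yields $\tfrac{\kappa_{\mathrm{UE}}}{\tau}\bR_i$ per user after dividing by $(\tau P)^2$. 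The AWGN term analogously contributes $\tfrac{\sigma^2}{\tau P}\Id_M$. The main obstacle, and the step that requires care, is the receive-distortion term $\tilde{\deltav}_{\mathrm{r}}=\bDelta_{\mathrm{r}}\bx_{\mathrm{p},k}$: conditional on the channels, its covariance equals $\|\bx_{\mathrm{p},k}\|^2\bm\Upsilon=\tau P\bm\Upsilon$, which is itself random through the Hadamard product $\Id_M\circ\bh_i\bh_i^{\H}$. I would therefore apply the tower property and use $\EE[\Id_M\circ\bh_i\bh_i^{\H}]=\Id_M\circ\bR_i$ (since the Hadamard with $\Id_M$ commutes with the expectation) to obtain the unconditional contribution $\tfrac{\kappa_{\mathrm{BS}}}{\tau}\sum_{i=1}^{K}\Id_M\circ\bR_i$.

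Finally, I would assemble the four pieces to recognise $\mathrm{Cov}(\br_k)=\bQ_k^{-1}$, and conclude $\hat{\bh}_k=\bR_k\bQ_k\br_k$. I would also briefly remark that the phase-error randomness at the IRS has already been absorbed into $\bR_k$ through the equivalent correlation matrix $\widetilde{\bR}_{\mathrm{IRS},k}$ derived in \eqref{cor3}, so no additional averaging over $\widetilde{\bTheta}$ is needed at the estimation stage.
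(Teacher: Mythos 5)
Your proposal is correct and follows essentially the same route as the paper's Appendix A: both apply the LMMSE/orthogonality formula $\hat{\bh}_k=\EE[\bh_k\br_k^{\H}]\left(\EE[\br_k\br_k^{\H}]\right)^{-1}\br_k$, identify the cross-covariance as $\bR_k$ via the zero-mean distortion terms, and assemble $\EE[\br_k\br_k^{\H}]=\bQ_k^{-1}$ term by term. Your write-up is somewhat more careful than the paper's (explicitly invoking the tower property to average $\bm\Upsilon$ over the channels, and noting that only second-order statistics, not Gaussianity, are needed), but these are refinements of the same argument rather than a different approach.
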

	\begin{proof}
		The proof is provided in Appendix~\ref{Proposition1}.	
	\end{proof}
	
	According to the property of orthogonality of LMMSE estimation, the overall perfect channel is given by 
	\begin{align}
		\bh_{k}=\hat{\bh}_{k}+\tilde{\bh}_{k},\label{current} \end{align}
	where $\hat{\bh}_{k}$ and $\tilde{\bh}_{k} $ have zero mean and variances $	\bPsi_{k}\!=\!\bR_{k}\bQ_{k}\bR_{k}$ and $ \tilde{\bPsi}_{k}=\bR_{k}-\bPsi_{k}$, respectively.  Contrary to conventional estimation theory concerning independent Gaussian noise, $\hat{\bh}_{k}$ and $\tilde{\bh}_{k}$ are neither independent nor jointly complex Gaussian vectors because the effective distortion noises are not Gaussian, e.g., $ \bh_{k} \tilde{\delta}_{\mathrm{t},k}$ is the product between two Gaussian variables. However, $\hat{\bh}_{k}$ and $\tilde{\bh}_{k} $ are uncorrelated and each of them has zero mean~\cite{Bjoernson2017}. In the unrealistic case of perfect HWIs, the LMMSE estimator of the overall channel vector coincides with the optimal MMSE estimator. Notably, the CE can be easily generalized to include other fading models such as independent Rayleigh fading, where $ \bR_{\mathrm{IRS},k}=\Id_{N} $ and $ \bR_{\mathrm{BS},k}=\Id_{M} $.

	\begin{remark}\label{rem3}
		A comparison with other CE schemes is difficult since the majority of works such as \cite{He2019} does not yield analytical expressions, while the proposed method is indicated for future closed-form manipulations. Compared to \cite{Nadeem2020} requiring $ N+1 $ subphases, our proposed method has a lower training overhead requiring only one phase and has achieved to obtain the estimated cascaded channel vector while, therein, only the estimated channel (scalar) concerning each element was obtained.
	\end{remark}
	
	\begin{remark}\label{rem2}
		Generally, the covariances $ \bR_{k} $, $ \bPsi_{k} $, and  $ \tilde{\bPsi}_{k} $ depend on both the T-HWIs and the IRS-HWIs. In the special case of uniformly distributed phase errors, these covariances do not depend on these errors or the reflect phase matrix $ \bTheta $. Then, we can not take benefit from any IRS optimization to minimize the estimation error and achieve better estimation.
	\end{remark}
	
	The study of the NMSE is insightful. Specifically, we define
	\begin{align}
		\mathrm{NMSE}_{k}&=\frac{\tr(\EE[(\hat{\bh}_{k}-{\bh}_{k})(\hat{\bh}_{k}-{\bh}_{k})^{\H}])}{\tr\left(\EE[{\bh}_{k}{\bh}_{k}^{\H}]\right)}\\
		&=1-\frac{\tr(\bPsi_{k})}{\tr(\bR_{k})}.\label{nmse1}
	\end{align}
	
	The T-HWIs are found inside $ \bPsi_{k} $ in terms of $ \bQ_{k} $ while the IRS-HWIs (phase noise) appear inside $ \bR_{k} $. From \eqref{nmse1}, we observe that an increase of the T-HWIs results in the increase of the $ \mathrm{NMSE}_{k} $. Moreover, according to Remark \ref{rem1}, if the phase errors are uniformly distributed, the NMSE does not depend on the RBM and the NMSE can not be optimized. A similar observation takes place if uncorrelated Rayleigh fading is assumed.
	%
	\section{Uplink Data Transmission with HWIs}\label{PerformanceAnalysis}
	In this section. we focus on the derivation of the uplink achievable sum SE of a practical IRS-aided MU-MISO setup with HWIs. The received signal by the BS can be written as
	\begin{align}
		\by=\sum_{i=1}^{K}\bh_{i}\left(s_{i}+\delta_{\mathrm{t},i}\right) +\deltav_{\mathrm{r}}+\bn,\label{ULTrans}
	\end{align}
	where $\bn\sim \mathcal{CN}\left(\b0,\sigma^2\Id_{M}\right) $ is the AWGN at the BS and phase noises are found inside the expression of $ \bh_{i} $, while $ \delta_{\mathrm{t},i}$ and $ \deltav_{\mathrm{r}} $ correspond to the T-HWIs.
	\subsection{Achievable SE}	\label{lower1}
	The BS estimates $ s_{k} $ from UE $ k $ by means of~\eqref{ULTrans} in terms of linear single-user detection by applying the receive combining vector $ \bv_{k} $ as $ \hat{s}_{k}= \bv_{k}^{\H}\by_{}$. 
	Moreover,  we exploit the use-and-then-forget (UatF) bound, suggested for systems with a large number of antennas ( $ M>8 $) \cite{Bjoernson2017}, in order to obtain a closed-form expression of the SE. Note that this bound can be applied with different channel estimators (not only LMMSE) and decoders. Specifically, $ \hat{s}_{k} $ can be rewritten as
	\begin{align}
		\hat{s}_{k}&=\underbrace{\sqrt{\rho_{k}}\EE\left\{\bv_{k}^{\H}{\bh}_{k}\right\} s_{k}}_{\mathrm{DS}_{k}} + \underbrace{\sqrt{\rho_{k}}\left(\bv_{k}^{\H}{\bh}_{k}-\EE\left\{\bv_{k}^{\H}{\bh}_{k}\right\}\right) s_{k}}_{\mathrm{BU}_{k}}\nn\\
		&+\sum_{i\ne k}^{K}\underbrace{\sqrt{\rho_{i}}\bv_{k}^{\H}{\bh}_{i}s_{i}}_{\mathrm{MUI}_{ik}} +\sum_{i=1}^{K}\underbrace{\bv_{k}^{\H}{\bh}_{i}\delta_{\mathrm{i},k}}_{\mathrm{TD}_{i}}+\underbrace{\bv_{k}^{\H}\deltav_{\mathrm{r}}}_{ \mathrm{RD}_{k}}+\underbrace{\bv_{k}^{\H}}_{\mathrm{RN}_{k}},
	\end{align}
	where
	$ \mathrm{DS}_{k} $, $ \mathrm{BU}_{k} $, $ \mathrm{MUI}_{ik} $ express the desired signal (DS) part, the beamforming gain uncertainty (BU), and each term of the sum describing the MU interference (MUI). Also, $ \mathrm{TD}_{i} $, $ \mathrm{RD}_{k} $, and $ \mathrm{RN}_{k} $ express the transmit distortion, the receive distortion, and the receiver AWGN noise. Next, by applying a standard bound technique assuming worst-case uncorrelated additive noise for the inter-user interference and the distortion noise \cite{Hassibi2003}, we derive a lower bound on the uplink average SE in bps/Hz, which is known as the use-and-then-forget bound in the massive MIMO (mMIMO) literature \cite{Bjoernson2017}. In particular, the achievable SE is given by
	\begin{align}
		\mathrm{SE}_{k}	=\frac{\tau_{\mathrm{c}}-\tau}{\tau_{\mathrm{c}}}\log_{2}\left ( 1+\gamma_{k}\right)\!,\label{LowerBound}
	\end{align}
	where   the pre-log fraction expresses the percentage of samples per coherence block used for uplink data transmission and $ \gamma_{k}=\frac{S_{k}}{I_{k}}$ is the uplink SINR with 
	\begin{align}
		S_{k}&=|\mathrm{DS}_{k} |^{2},\label{sig11}\\
		I_{k}&=\EE\left\{|\mathrm{BU}_{k}|^{2}\right\}+\sum_{i\ne k}^{K}\EE\left\{\mathrm{MUI}_{ik}\right\}+\sum_{i=1}^{K} \EE\left\{|\mathrm{TD}_{i}|^{2}\right\}\nn\\
		& +\EE\left\{|\mathrm{RD}_{k}|^{2}\right\}\!+\!\EE\left\{|\mathrm{RN}_{k}|^{2}\right\}\!\label{int1}
	\end{align}
	describing the desired signal power and the interference plus noise power. 	For the sake of further convenience, we denote by $ \sigma_{\mathrm{UE}}^{2}= \sum_{i=1}^{K} \EE\left\{|\mathrm{TD}_{i}|^{2}\right\}$ and $ \sigma_{\mathrm{BS}}^{2}= \EE\left\{|\mathrm{RD}_{k}|^{2}\right\}$ the variances of the additive transmit and receive HWIs from all transmit UEs and at the output of the decoder.
	
	Generally, MRC and conventional MMSE decoders are the most common linear receivers for the uplink of next-generation systems such as mMIMO \cite{Hoydis2013,Papazafeiropoulos2015a}. However, the expectations in \eqref{sig11} and \eqref{int1} cannot be derived in closed-form in the case of the optimal MMSE receiver except if the deterministic equivalent analysis is  applied \cite{Hoydis2013,Papazafeiropoulos2015a,Papazafeiropoulos2016}. Also, the next step that includes the optimization with respect to reflection coefficients would be quite intractable. Hence, we focus on the derivation of a closed-form SINR by applying MRC decoding, which can be obtained even for a finite number of BS antennas.  Thus, below we assume $ \bv_{k}=\hat{\bh}_{k}$. However,  given the higher  performance expected by MMSE decoding, its application in the study of HWIs with statistical CSI according to the proposed methodology is the topic of ongoing work.

	\begin{Theorem}\label{theorem:ULDEMMSE}
		Given the RBM $ \bTheta $, the uplink achievable SINR of UE $k$ with MRC decoding in an IRS-assisted MU-MISO system, accounting for imperfect CSI and HWIs, is given by
		\begin{align}
			\gamma_{k}\!=\!	\frac{\bar{S}_{k}}{	\bar{I}_{k}},
2		\end{align}
		where
		\begin{align}
		&	\bar{S}_{k}\!=\!\rho_{k}|\!\tr\left(\bPsi_{k}\right)\!|^{2},\label{Num1}\\
		&	\bar{I}_{k}\!=\!\left(1\!+\!\kappa_{\mathrm{UE}}\right)\!\!\left(\sum_{i=1}^{K}\!\rho_{i}\!  \tr\!\left(\bPsi_{k} \bR_{i}\right)\!-\!\rho_{k}\!\tr\left( \bPsi_{k}^{2}\right)\!\!\!\right)\!\!+\!\rho_{k}\kappa_{\mathrm{UE}}|\!\tr\!\left(\bPsi_{k}\!\right)|^{2}\nn\\
			&+\!\bkappa_{\mathrm{BS}} \!\left(\!\!\rho_{k}|\! \tr \!\left( \Id_{M}\!\circ\! \bPsi_{k}\right)\!|^{2}\! +\!\!\sum_{i=1}^{K}\!\rho_{i}\!\tr \left(\left(\Id_{M}\!\circ\! \bR_{i}\right) \!\bPsi_{k}\right)\!\!\right)\!\!+\!\sigma^2\! \tr\!\left(\bPsi_{k}\right)\!.\label{Den1}
		\end{align}
	\end{Theorem}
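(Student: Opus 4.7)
The plan is to evaluate each of the six expectations in \eqref{sig11}--\eqref{int1} separately under $\bv_k=\hat{\bh}_k$ and then collect them into the stated closed form. The workhorse tools are the LMMSE decomposition $\bh_k=\hat{\bh}_k+\tilde{\bh}_k$ with $\EE[\hat{\bh}_k\tilde{\bh}_k^{\H}]=\b0$ from \eqref{current}, the covariance identities $\EE[\hat{\bh}_k\hat{\bh}_k^{\H}]=\bPsi_k$ and $\EE[\tilde{\bh}_k\tilde{\bh}_k^{\H}]=\bR_k-\bPsi_k$ from Proposition~\ref{PropositionDirectChannel}, the mutual independence of $\delta_{\mathrm{t},i}$, $\deltav_{\mathrm{r}}$, and $\bn$ from the channels, and the cross-user uncorrelatedness of $\hat{\bh}_k$ with $\bh_i$ for $i\neq k$ implied by the orthogonal pilot construction in \eqref{train1}. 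For the fourth-order channel moments I would treat $\hat{\bh}_k$ as approximately $\CN(\b0,\bPsi_k)$ jointly with $\tilde{\bh}_k\sim \CN(\b0,\bR_k-\bPsi_k)$, consistent with the UatF philosophy of using only second-order information of the LMMSE estimator.

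First, $\EE[\hat{\bh}_k^{\H}\bh_k]=\EE[\|\hat{\bh}_k\|^2]=\tr(\bPsi_k)$ by orthogonality, which gives $\bar{S}_k=\rho_k|\tr(\bPsi_k)|^2$ and $\EE[|\mathrm{RN}_k|^2]=\sigma^2\tr(\bPsi_k)$ immediately. I would then merge $\mathrm{BU}_k$ with a fictitious $i=k$ summand to rewrite the BU$+$MUI contribution as $\sum_i\rho_i\EE[|\hat{\bh}_k^{\H}\bh_i|^2]-\rho_k|\tr(\bPsi_k)|^2$. For $i\neq k$, cross-user uncorrelatedness gives $\EE[|\hat{\bh}_k^{\H}\bh_i|^2]=\tr(\bPsi_k\bR_i)$. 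For $i=k$, I would expand $\bh_k=\hat{\bh}_k+\tilde{\bh}_k$, eliminate the orthogonal cross terms, and apply Isserlis to the two surviving Gaussian quadratic forms to reach an expression of the form $|\tr(\bPsi_k)|^2+\tr(\bPsi_k\bR_k)-\tr(\bPsi_k^{2})$, the $-\tr(\bPsi_k^{2})$ arising from the difference between the four-variate moment $\EE[\|\hat{\bh}_k\|^4]$ and the factored counterpart $(\EE[\|\hat{\bh}_k\|^2])^2$. Adding the transmit distortion $\sum_i\EE[|\mathrm{TD}_i|^2]=\kappa_{\mathrm{UE}}\sum_i\rho_i\EE[|\hat{\bh}_k^{\H}\bh_i|^2]$, which follows from independence of $\delta_{\mathrm{t},i}$ and reuses the same moments, produces the $(1+\kappa_{\mathrm{UE}})(\sum_i\rho_i\tr(\bPsi_k\bR_i)-\rho_k\tr(\bPsi_k^{2}))+\rho_k\kappa_{\mathrm{UE}}|\tr(\bPsi_k)|^2$ block of \eqref{Den1}.

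For the receive-distortion variance I would condition on the channels and use $\bm\Upsilon=\kappa_{\mathrm{BS}}\sum_i p_i\,\Id_M\circ\bh_i\bh_i^{\H}$ to obtain $\EE[|\mathrm{RD}_k|^2]=\kappa_{\mathrm{BS}}\sum_i\rho_i\,\EE[\hat{\bh}_k^{\H}(\Id_M\circ\bh_i\bh_i^{\H})\hat{\bh}_k]$. For $i\neq k$ independence collapses the Hadamard-diagonal quadratic form into $\tr((\Id_M\circ\bR_i)\bPsi_k)$, while for $i=k$ an Isserlis expansion restricted to the diagonal index pairs contributes the extra $|\tr(\Id_M\circ\bPsi_k)|^2$ that ends up multiplied by $\rho_k\kappa_{\mathrm{BS}}$ in \eqref{Den1}. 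Summing all six contributions reassembles $\bar{I}_k$ exactly.

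The main obstacle is precisely the point flagged in the remark after \eqref{current}: the observation $\br_k$ in \eqref{train2} carries the multiplicative distortions $\bh_i\tilde{\delta}_{\mathrm{t},i}$, so $(\hat{\bh}_k,\tilde{\bh}_k)$ and $(\hat{\bh}_k,\bh_i)$ are not strictly jointly Gaussian and the Isserlis identities that generate the $\tr(\bPsi_k^{2})$ and $|\tr(\Id_M\circ\bPsi_k)|^2$ terms are not rigorously valid. The standard UatF remedy is to replace each effective non-Gaussian distortion by its Gaussian surrogate with the same second-order covariance---the same approximation already implicit in the LMMSE construction of Proposition~\ref{PropositionDirectChannel}---after which every fourth-order moment factors into products and traces of covariances and the closed forms \eqref{Num1}--\eqref{Den1} fall out. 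Careful bookkeeping in this Gaussian-surrogate step, so that no second-order cross-correlation between $\hat{\bh}_k$ and the HWI/noise components is silently dropped during the fourth-order calculations, is where most of the technical effort will go.
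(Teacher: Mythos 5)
Your overall decomposition---evaluating the six UatF terms separately with $\bv_k=\hat{\bh}_k$, splitting $\bh_k=\hat{\bh}_k+\tilde{\bh}_k$, invoking cross-user independence for the MUI/TD/RD sums, and expanding the Hadamard-diagonal quadratic form with a Gaussian fourth-moment identity for the receive distortion---is the same route the paper takes in Appendix~\ref{theorem1}. The genuine problem is in the $i=k$ fourth-order moment. You claim that applying Isserlis "reaches" $\EE[|\hat{\bh}_k^{\H}\bh_k|^2]=|\tr(\bPsi_k)|^2+\tr(\bPsi_k\bR_k)-\tr(\bPsi_k^2)$, attributing the $-\tr(\bPsi_k^2)$ to the difference $\EE[\|\hat{\bh}_k\|^4]-(\EE[\|\hat{\bh}_k\|^2])^2$. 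That difference is a variance, hence equals $+\tr(\bPsi_k^2)$, and under Isserlis it exactly cancels the $-\tr(\bPsi_k^2)$ that comes from the error term $\EE[|\hat{\bh}_k^{\H}\tilde{\bh}_k|^2]=\tr\left(\bPsi_k(\bR_k-\bPsi_k)\right)$. A faithful execution of your plan therefore yields $\EE[|\hat{\bh}_k^{\H}\bh_k|^2]=|\tr(\bPsi_k)|^2+\tr(\bPsi_k\bR_k)$, a beamforming-uncertainty power of $\tr(\bPsi_k\bR_k)$ with no $-\tr(\bPsi_k^2)$, and hence a denominator that does not match \eqref{Den1}: the $-\rho_k\tr(\bPsi_k^2)$ inside the $(1+\kappa_{\mathrm{UE}})$ bracket would be absent.

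The paper obtains that term differently: in \eqref{est3}--\eqref{est4} and again in \eqref{esto1} it sets $\EE[|\hat{\bh}_k^{\H}\hat{\bh}_k|^2]=|\EE[\hat{\bh}_k^{\H}\hat{\bh}_k]|^2=|\tr(\bPsi_k)|^2$, i.e., it discards the fluctuation of $\|\hat{\bh}_k\|^2$ rather than evaluating its fourth moment by Isserlis, so that only $\EE[|\hat{\bh}_k^{\H}\tilde{\bh}_k|^2]=\tr(\bPsi_k\bR_k)-\tr(\bPsi_k^2)$ survives. To reproduce the stated $\bar{I}_k$ you must adopt that convention for $\EE[\|\hat{\bh}_k\|^4]$; you cannot simultaneously use the full Gaussian fourth moment there and arrive at \eqref{Den1}. (By contrast, in the receive-distortion term the paper \emph{does} use the full Gaussian fourth moment, via the $\bee_m\bee_m^{\H}$ expansion and \cite[Lemma~2]{Bjornson2015}, to get $|\tr(\Id_M\circ\bPsi_k)|^2+\tr\left((\Id_M\circ\bR_k)\bPsi_k\right)$, which is consistent with your plan for that piece.) The remaining contributions---DS, RN, MUI and TD for $i\neq k$---are handled exactly as you describe.
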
 
	\proof The proof  is provided in Appendix~\ref{theorem1}.\endproof
	\begin{remark}
		Theorem \ref{theorem:ULDEMMSE}	provides the uplink achievable SINR with MRC under imperfect CSI in closed-form. Notably, it shows the impact of the unavoidable HWIs. Especially, it depends directly on the T-HWIs by means of $ \bkappa_{\mathrm{BS}} $ and $ \bkappa_{\mathrm{UE}} $. The impact of the phase noise appears indirectly through the covariance matrices. Moreover, the expression of 	$ \gamma_{k} $ depends only on slowly-varying large-scale statistics.
	\end{remark}

	Based on $ \gamma_{k} $, provided by Theorem \ref{theorem:ULDEMMSE}, the system (sum) SE in bps/Hz is obtained as
	\begin{align}
		\mathcal{R}=\frac{\tau_{\mathrm{c}}-\tau}{\tau_{\mathrm{c}}}\sum_{i=1}^{K}\log_{2}\left(1+ \gamma_{ i}\right).\label{sumse}
	\end{align}
	\subsection{IRS Design Problem: Formulation and Solution}\label{IRSdesign}
	IRS-aided architectures require to design the corresponding RBM, found inside the covariance matrices, in order to maximize the sum SE given by \eqref{sumse}. Hence, by resorting to the common assumption of infinite resolution phase shifters, herein, we formulate and solve 	the RB design problem under MRC and realistic conditions accounting for imperfect CSI and HWIs as
	\begin{align}\begin{split}
			(\mathcal{P}1)~~~~~~~\max_{\bTheta} ~~~	&\mathcal{R}\\
			\mathrm{s.t}~~~&|\phi_{n}|=1,~~ n=1,\dots,N,
		\end{split}\label{Maximization} 
	\end{align}
	with $ 	\mathcal{R} $ given by \eqref{sumse} and $ \phi_{n}= \exp\left(j \theta_{n}\right) $ are the elements of $ \bTheta $. Obviously, $ 	(\mathcal{P}1) $ is a non-convex maximization problem with respect to $ \bTheta $ with a unit-modulus constraint regarding $ \phi_{n} $. 	
	
	\begin{remark}\label{rem5}
		If the phase noise is uniformly distributed, the covariance matrices will not include the RBM according to Remark \ref{rem2}. Hence, the SINR/SE cannot be optimized, and the IRS does not serve its purpose.
	\end{remark}
	\begin{remark}
		Given that HWIs at the transceiver and IRS degrade the performance, the use of cheaper (lower quality) hardware will have a direct impact on the QoS. In such a case, a better RB design is suggested to compensate for the loss and improve the performance.
	\end{remark} 
	

	Taking the expression of $ \gamma_{k} $ into account, the optimization problem takes the form of a constrained maximization problem with a solution given by means of projected gradient 	ascent until converging to a 	stationary point as in \cite{Kammoun2020}. At every step, we project the solution onto the closest feasible point satisfying the unit-modulus constraint concerning $ \phi_{n} $. In more detail, the procedure assumes the vectors $ \bs^{i} =[\phi_{1}^{i}, \ldots, \phi_{N}^{i}]^{\T}$ including 	the induced phases at step $ i $. The next iteration point, resulting in the increase of $ \mathcal{R} $ towards to its convergence, is given by
	\begin{align}
		\tilde{\bs}^{i+1}&=\bs^{i}+\mu \bq^{i},\label{sol1}\\
		\bs^{i+1}&=\exp\left(j \arg \left(\tilde{\bs}^{i+1}\right)\right),\label{sol2}
	\end{align}
	where $ \mu $ is the step size and $ \bq^{i} $ describes the adopted ascent direction at step $ i $. In particular, we have $ [\bq^{i}]_{n}= \pdv{	\mathcal{R}}{\phi_{n}^{*}} $, which is obtained by Proposition \ref{Prop:optimPhase} below. 	 The suitable step size is computed at each iteration by means of the backtracking line search \cite{Boyd2004}. The solution of the problem, described by \eqref{sol1} and \eqref{sol2}, is found based on the projection problem $ \min_{|\phi_{n} |=1, n=1,\ldots,N}\|\bs-\tilde{\bs}\|^{2} $ under the unit-modulus constraint. The outline of the algorithm is described by Algorithm \ref{Algoa1}.
	\begin{algorithm}
		\caption{Projected Gradient Ascent Algorithm for the IRS Design}
		1.				 \textbf{Initialisation}: $ \bs^{0} =\exp\left(j\pi/2\right)\one_{N}$, $ \bTheta^{0}=\diag\left(\bs^{0}\right) $, $ 	\mathcal{R}^{0}=f\left(\bTheta^{0}\right) $ given by \eqref{sumse}; $ \epsilon>0 $\\
		2. \textbf{Iteration} $ i $: \textbf{for} $ i=0,1,\dots, $ do\\
		3. $[\bq^{i}]_{n}= \pdv{	\mathcal{R}}{\phi_{n}^{*}}, n=1, \ldots,N $, where $\pdv{	\mathcal{R}}{\phi_{n}^{*}} $ is given by Proposition \ref{Prop:optimPhase};\\
		4. \textbf{Find} $ \mu $ by backtrack line search$( f\left(\bTheta^{0}\right),\bq^{i},\bs^{i})$ \cite{Boyd2004};\\
		5. $ \tilde{\bs}^{i+1}=\bs^{i}+\mu \bq^{i} $;\\
		6. 	$ \bs^{i+1}=\exp\left(j \arg \left(\tilde{\bs}^{i+1}\right)\right) $; $ \bTheta^{i+1}=\al \diag\left(\bs^{i+1}\right) $;\\
		7. $ \mathcal{R}^{i+1}=f\left(\bTheta^{i+1}\right) $;\\
		8. \textbf{Until} $ \| \mathcal{R}^{i+1}- \mathcal{R}^{i}\|^{2} <\epsilon$; \textbf{Obtain} $ \bTheta^{*}=\bTheta^{i+1}$;\\
		9. \textbf{end for}\label{Algoa1}
	\end{algorithm}
	
	 The convergence of the proposed algorithm to a local maximum can be guaranteed because it is bounded due to the power constraint and it increases by setting $ [\bq^{i}]_{n}= \pdv{ \mathcal{R}}{\phi_{n}^{*}} $, where the backtracking line search is used to find a suitable step size.
	
	\begin{proposition}\label{Prop:optimPhase}
		The derivative of $ \mathcal{R} $ with respect to $ \phi_{n} $ is provided by
		\begin{align}
			\pdv{	\mathcal{R}}{\phi_{n}}=\sum_{k=1}^{K} \frac{\pdv{\bar{S}_{k}}{\phi^{*}_{n}}\bar{I}_{k}-\bar{S}_{k}\pdv{\bar{I}_{k}}{\phi^{*}_{n}}}{\ln (2) \bar{I}_{k}^{2}\left(1+\frac{\bar{S}_{k}}{\bar{I}_{k}}\right)},\label{der10}
		\end{align}
		where 	$ \bar{S}_{k} $, $ \bar{I}_{k} $ follow by \eqref{Num1}, \eqref{Den1} while
		\begin{align}
			&\pdv{\bar{S}_{k}}{\phi^{*}_{n}}	=2\rho_{k}\tr\left(\bPsi_{k}\right)\bL\! \left(\bR_{k},\bR_{k},\Id_{M} \right),\\
			&\pdv{\bar{I}_{k}}{\phi^{*}_{n}}=\left(1+\kappa_{\mathrm{UE}}\right)\big(\sum_{i=1}^{K}\!\rho_{i}\big(\bL \big(\bR_{i}\bR_{k},\bR_{i}\bR_{k},\bR_{i} \big)	\nn\\&\!\!+\!\al\beta_{2,k}[\bH_{1}^{\H}\bPsi_{k}\bH_{1} \bTheta\bR_{\mathrm{IRS},k}]_{n,n}\big)\nn\\
			&\!\!-\!2\rho_{k}\bL \big(\bPsi_{k}\bR_{k},\bPsi_{k}\bR_{k},\bPsi_{k}\big)\big)\nn\\
			&\!\!+\!\bkappa_{\mathrm{BS}} \big(2\rho_{k} \tr \big( \Id_{M}\circ \bPsi_{k}\big) \bL \big(\bR_{k},\bR_{k}, \Id_{M}\big)\!\nn\\ &\!\!+\!\sum_{i=1}^{K}\!\rho_{i}\big(\al\beta_{2,k}[\bH_{1}^{\H}\bPsi_{k}\bH_{1} \bTheta\bR_{\mathrm{IRS},k}]_{n,n}\nn\\
			&			\!\!+\!\bL \big(\big(\Id_{M}\circ \bR_{i}\big)\bR_{k},\big(\Id_{M}\circ \bR_{i}\big)\bR_{k},\Id_{M}\circ \bR_{i}\big)\big)\big)\nn\\
			&\!\!+\!2\rho_{k}\kappa_{\mathrm{UE}}\tr\big(\bPsi_{k}\big)\bL \big(\bR_{k},\bR_{k},\Id_{M} \big)\!+\!\sigma^2 \bL \big(\bR_{k},\bR_{k},\Id_{M} \big)
		\end{align}
		with $ 	\bL\! \left(\bA,\bB, \bC\right) $		 given by \eqref{LABC}									
		for any $ \bA\in\mathbb{C}^{N \times N} $, $ \bB \in\mathbb{C}^{N \times N} $, and $ \bC\in \mathbb{C}^{N \times N} $.	\end{proposition}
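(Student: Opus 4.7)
The plan is to differentiate $\mathcal{R}$ one SINR at a time, using the chain rule through the logarithm and the quotient rule on $\gamma_k=\bar S_k/\bar I_k$. Writing $\mathcal{R}=\frac{\tau_{\mathrm{c}}-\tau}{\tau_{\mathrm{c}}}\sum_{k=1}^K\log_2(1+\bar S_k/\bar I_k)$ and applying $\frac{d}{dz^*}\log_2(1+u/v)=\frac{u'v-uv'}{\ln 2\,v^2(1+u/v)}$ immediately produces the outer structure in~\eqref{der10}. What remains is to compute $\partial\bar S_k/\partial\phi_n^*$ and $\partial\bar I_k/\partial\phi_n^*$ term-by-term from the closed forms \eqref{Num1}--\eqref{Den1} using Wirtinger calculus on the matrix building blocks $\bR_k$, $\bQ_k$, and $\bPsi_k$.

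The core lemma I would establish is the derivative of $\bR_k$. Since $\bR_k=\beta_{\mathrm{d},k}\bR_{\mathrm{BS},k}+\beta_{2,k}\bH_1\bTheta\widetilde{\bR}_{\mathrm{IRS},k}\bTheta^{\H}\bH_1^{\H}$ and $\phi_n^*$ enters only through the $n$th diagonal entry of $\bTheta^{\H}$, one obtains $\partial\bR_k/\partial\phi_n^*=\beta_{2,k}\bH_1\bTheta\widetilde{\bR}_{\mathrm{IRS},k}\bE_{nn}\bH_1^{\H}$, where $\bE_{nn}$ is the all-zero matrix with a single $1$ at position $(n,n)$. The derivative of $\bQ_k$ then follows from the standard identity $\partial\bQ_k/\partial\phi_n^*=-\bQ_k(\partial\bQ_k^{-1}/\partial\phi_n^*)\bQ_k$, where $\partial\bQ_k^{-1}/\partial\phi_n^*$ is a weighted sum of $\partial\bR_i/\partial\phi_n^*$ across users (plus a Hadamard-masked copy coming from the receive T-HWI term). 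Applying the product rule to $\bPsi_k=\bR_k\bQ_k\bR_k$ and repeatedly invoking $\tr(\bX\bE_{nn}\bY)=[\bY\bX]_{n,n}$ together with the cyclicity of trace reduces every derivative of a trace involving $\bPsi_k$ to a linear combination of $(n,n)$-entries of explicit matrix products.

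I would then substitute these building blocks into each trace appearing in \eqref{Num1}--\eqref{Den1} and collect the three contributions produced by every $\bPsi_k$ (one from each $\bR_k$ factor and one from $\bQ_k$) into the compact operator $\bL(\bA,\bB,\bC)$ of~\eqref{LABC}; the terms in $\bar I_k$ where $\bPsi_k$ is sandwiched directly, without passing through $\bQ_k$, instead contribute the explicit $\beta_{2,k}[\bH_1^{\H}\bPsi_k\bH_1\bTheta\bR_{\mathrm{IRS},k}]_{n,n}$ summands. The main obstacle is organisational rather than analytical: $\bPsi_k$ is cubic in the $\phi_n^*$-dependent matrices and $\bar I_k$ contains five distinct trace types, so a naive expansion proliferates terms. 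Introducing $\bL(\cdot,\cdot,\cdot)$ as a unified shorthand for the recurrent three-piece pattern is what prevents this combinatorial blow-up and delivers the closed-form derivative announced in the proposition.
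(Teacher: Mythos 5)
Your proposal follows essentially the same route as the paper's proof: chain rule through the logarithm, quotient rule on $\gamma_k=\bar S_k/\bar I_k$, the inverse-matrix derivative identity for $\bQ_k$, the product rule on $\bPsi_k=\bR_k\bQ_k\bR_k$, and a trace identity for $\partial\bR_k/\partial\phi_n^*$ (your $\bE_{nn}$ formulation with $\tr(\bX\bE_{nn}\bY)=[\bY\bX]_{n,n}$ is exactly the content of the paper's Lemma~\ref{traceProd}, proved there by elementwise expansion), with the recurring three-piece pattern collected into $\bL(\cdot,\cdot,\cdot)$. The approach is correct and matches the paper's.
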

	\proof The proof of Proposition~\ref{Prop:optimPhase} is given in Appendix~\ref{optimPhase}.\endproof
	
	  The RBM  beamforming design, based on the  gradient ascent,  results in an outstanding
		performance since  the  gradient ascent is obtained in a closed-from  with low computational complexity based on simple matrix operations. In particular, the complexity of \eqref{der10} is $ \mathcal{O}\left(MN^{2}+M^{2}N+M^{2}K\right) $. Obviously, it depends on all fundamental system parameters, i.e., $ K $, $ M $, and $N  $  but with a higher (square) dependence on $ M $ and $ N $.
	\begin{remark}
		 If we do not have a closed-form expression for the SE,  we cannot apply the proposed method. Also, although the proposed algorithm, given by \eqref{sol1} and \eqref{sol2}, does not provide a global optimum but a locally optimal solution due to the non-convexity of the initial optimization problem with respect to the phase shifts, it offers a good preliminary tool to study IRS-aided systems under realistic conditions in terms of imperfect CSI and HWIs.
	\end{remark}
	\begin{remark}
		The dependence of the algorithm on the large-scale channel statistics achieves a reduction in the signal exchange overhead between the IRS controller and the BS since it will take place every several coherence intervals defined by the variation of these statistics.  On the contrary, on models, relying on the instantaneous CSI, the optimization should take place at every coherence interval, which results in large overhead, especially, when the IRS is large. We highlight that the proposed method can be exploited in both low-speed and fast-speed scenarios. The only difference is that in fast-speed scenarios, the large-scale statistics  change faster. For this reason, in such cases, the optimization should take place more frequently. Also, the simple expression of $ \pdv{ \mathcal{R}}{\phi_{n}^{*}} $ results in a significant decrease of the computational complexity.  These two reasons make the proposed method quite beneficial.
	\end{remark}
	\begin{remark}
		The proposed methodology for the RB design, i.e., the optimization of the phase matrix after having derived the performance expression in terms of large-scale statistics, could also be used in the case of negligence of HWIs, where another property of an IRS-assisted system would be the main topic of study. In such a case, the optimization in terms of the derivative will be simplified even more since the additional terms, concerning the T-HWIs which depend on the overall channel (the optimization variables $ \phi_{n} $), will be omitted. 
	\end{remark}

	\section{Numerical Results}\label{Numerical} 
	In this section, we depict and discuss the analytical results corresponding to the uplink performance in terms of CE and achievable sum SE of an IRS-aided MU-MISO system with imperfect CSI and HWIs. Monte-Carlo (MC) simulations ($ 10^{3} $ independent channel realizations) represented by "\ding{53}" marks in Figs. \ref{Fig0} and \ref{Fig2} below, corroborate our analysis and the tightness of the UatF bound. For the sake of comparison, we have modified \cite{Nadeem2020} to describe the uplink transmission. 
	\subsection{Simulation Setup}
	We consider a uniform linear array (ULA) of $ M $ antennas ($ M=16 $) at the BS, assisted by an IRS with a uniform planar array (UPA) of $ N $ elements ($ N=60 $) that serve $ K=5 $ UEs. The spatial correlation coefficient for the IRS between elements $ n $ and $ n' $ corresponding to UE $ k $ is given by \cite{Kammoun2020}.
\!\!	\begin{align}
		[\bR_{\mathrm{IRS},k}]_{n,n'}\!=\!\EE\big[\!\exp\!\big(j \frac{2 \pi}{\lambda}d_{\mathrm{IRS}}\!\left(n-n'\right)\sin \phi_{k}\sin \theta_{k}\big)\!\big],\label{RIS}
	\end{align}
	where $ d_{\mathrm{IRS}}=0.5\lambda $ while $ \phi_{k} $ and $ \theta_{k} $ express the elevation and azimuth angles for UE $ k $, and are generated by the Laplace and the Von Mises distribution, respectively. In the former case, we assume that the mean angle of departure and the spread are $ 90^{\circ} $ and $ 8^{\circ} $, respectively. The latter distribution is generated with mean angle of departure $ 0 $ and spread $ 0.2 $. Regarding the parameters for the channel matrix $ \bH_{1} $ between the BS and the IRS, we assume $ d_{\mathrm{BS}} = 0.5\lambda $ and $ \theta_{1,n} $, $ \psi_{1,n} $ are uniformly distributed between $ 0 $ to $ \pi $ and $ 0 $ to $ 2\pi $, respectively. Also, $ \theta_{2,n}= \pi- \theta_{1,n} $, $ \psi_{2,n}=\pi+ \psi_{1,n}$. The correlation matrix $ \bR_{\mathrm{BS},k} $ is obtained similar to \cite{Hoydis2013}. Moreover, the overall path loss for the IRS-assisted link is given by \cite{Wu2019a,Bjoernson2019b,Kammoun2020}
	\begin{align}
		\bar{\beta}_{k}=C_{1}C_{2} d_{\mathrm{BS}-\mathrm{IRS}}^{-\al_{1}}d_{\mathrm{IRS}-\mathrm{UE}_{k}}^{-\al_{2}},
	\end{align}
	where $ \beta_{1}=\frac{C_{1}}{d_{\mathrm{BS}-\mathrm{IRS}}^{\al_{1}}} $ and $ \beta_{2,k}=\frac{C_{2}}{d_{\mathrm{IRS}-\mathrm{UE}_{k}}^{\al_{2}}} $ are the channel attenuation coefficients between the BS and the IRS and between 	the IRS and UE $ k $, respectively. Note that $ \al_{1} $ and $ d_{\mathrm{BS}-\mathrm{IRS}} $ are the path-loss exponent and distance concerning the link BS-to-IRS, while $ \al_{2} $ and $ d_{\mathrm{IRS}-\mathrm{UE}_{k}} $ are the path-loss exponent and distance concerning the link IRS-to-UE $ k $.
	The parameters values are chosen relied on the 3GPP Urban Micro (UMi) scenario from TR36.814 for a carrier frequency of $ 2.5 $ GHz and noise level $ -80 $ dBm, where the path losses for $ \bh_{2,k} $ and $ \bH_{1} $ are generated based on the NLOS and LOS versions \cite{3GPP2017}. Hence, we have $ \al_{1} =2.2$ and $ \al_{2} =3.67$ while $ d_{\mathrm{BS}-\mathrm{IRS}} =8~\mathrm{m}$ and $ d_{\mathrm{IRS}-\mathrm{UE}_{k}} =60~\mathrm{m}$. Also, $ C_{1}=26 $ dB, $ C_{2}=28 $ dB by assuming $ 5 $ dBi antennas at the BS and IRS while each UE includes a single
	0dBi antenna \cite{Bjoernson2019b}. For $ \beta_{\mathrm{d},k} $, we assume the same parameters as for $ \beta_{2,k} $, but we also consider an additional penetration loss equal to $ 15~\mathrm{dB} $. In addition, we assume that the coherence bandwidth is $B_{\mathrm{c}}= 200~\mathrm{KHz} $ and the coherence time is $ T_{\mathrm{c}}=1~\mathrm{ms} $, i.e., each coherence block consists of $\tau_{\mathrm{c}}=200$ samples. Also, we assume $ P=6~\mathrm{dB}$ and the same value for $ p_{i}=\rho,~\forall i $ during the uplink data transmission. Note that $ \sigma^2=-174+10\log_{10}B_{\mathrm{c}} $.
	
	For the study of the T-HWIs, we assume that we have an Analog-to-Digital Converter (ADC) at the BS, which quantizes the received signal to a $b$ bit resolution. As a consequence, the receive distortion can be written as $\kappa_{\mathrm{BS}}=2^{-2b}/\left(1-2^{-2b}\right)$, which gives $ \kappa_{\mathrm{BS}}= 0.258^{2},~0.126^{2}$, and $0.062^{2}$ for $ b=2, 3 $, and $ 4 $ bits, respectively ~\cite{Bjornson2015,Papazafeiropoulos2017}. In other words, a smaller resolution results in more severe distortion. The same value is used for $ \kappa_{\mathrm{UE}} $. Note that the trend in 5G networks and beyond is the use of lower precision ADCs. Regarding, the IRS-HWIs, if the Von Mises PDF is assumed to model the phase noise, the concentration parameter is set to $ \kappa_{\tilde{\theta}}=2 $. Unless otherwise stated, this set of parameters is used during the simulations.
	
	\begin{figure}[!h]
		\begin{center}
			\includegraphics[width=0.95\linewidth]{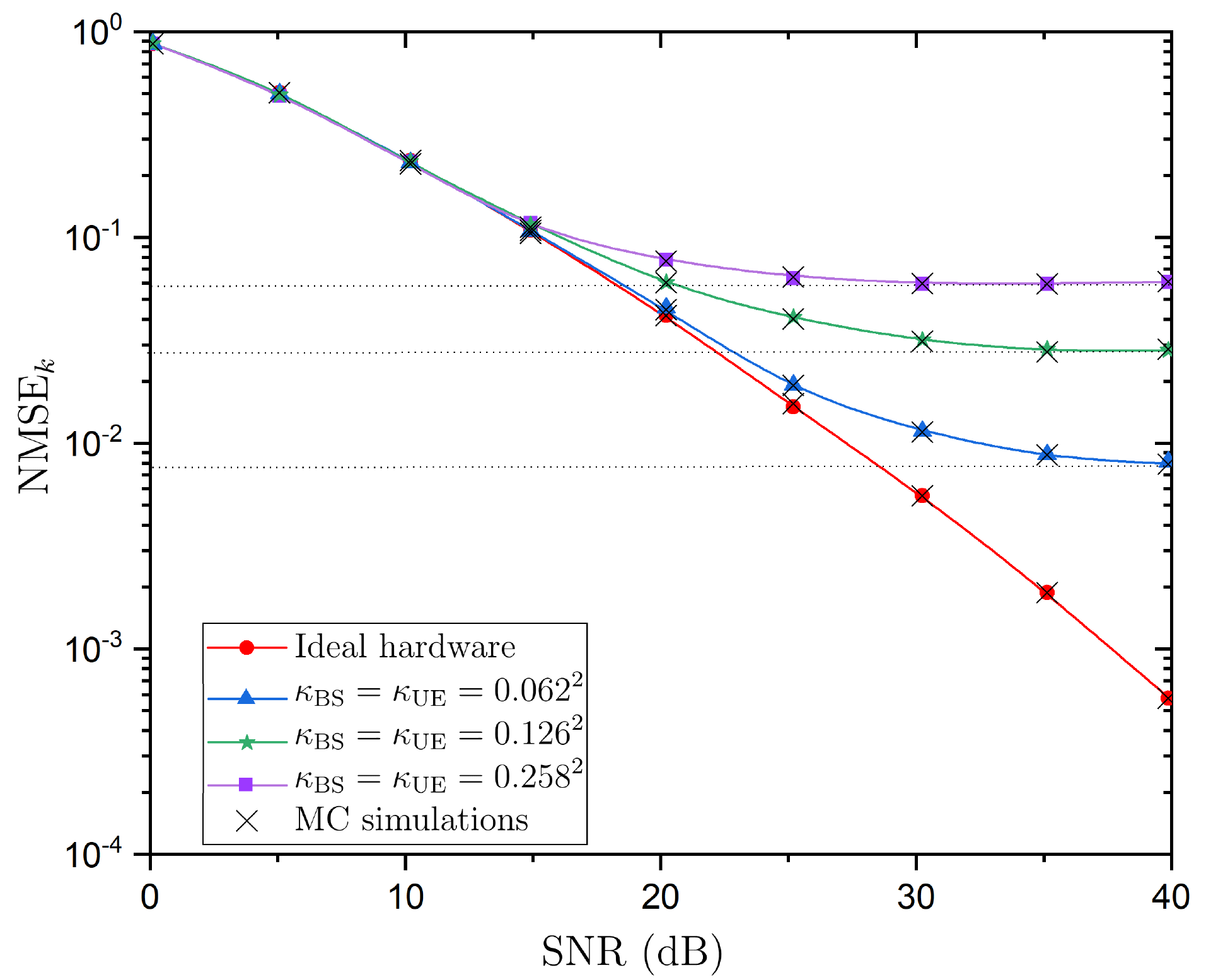}
			\caption{\footnotesize{NMSE of UE $ k $ versus the SNR of an IRS-assisted MIMO system with imperfect CSI ($ M=16 $, $ N=60 $, $ K=5 $) for varying T-HWIs $ \kappa_{\mathrm{BS}} $, $\kappa_{\mathrm{UE}}$ in the cases of uniform PDF for the phase noise or uncorrelated fading at the IRS (Analytical results and MC simulations). }}
			\label{Fig0}
		\end{center}
	\end{figure}
	Fig. \ref{Fig0} illustrates the relative estimation error per channel element, i.e., the normalized mean square error (NMSE) with respect to the uplink SNR for different values of the T-HWIs defining certain noise floors (asymptotic limits as $ p \to \infty $). In addition, we show the result corresponding to perfect hardware. Obviously, this line decreases without bound. Moreover, it is shown that the error floors go higher with increasing the severity of T-HWIs. Even at mild values of T-HWIs, we observe that the NMSE approaches the corresponding floor after $ 20~\mathrm{dB}$, which means that IRS-assisted systems require a high SNR to operate since they are dependent on a conventional MU-MISO architecture. For the sake of exposition, we have assumed uniform phase noise or uncorrelated fading at the IRS, in order to avoid any RB optimization since the NMSE does not depend on $ \bTheta $ in these cases (Remark~\ref{rem1}). Optimization with respect to $ \bTheta $ has been performed only in the case of the sum SE. Based on Remark \ref{rem3}, comparisons with other methods cannot be made at this stage. However, a comparison with respect to \cite{Nadeem2020} takes place below (Fig. \ref{Fig1}) in the case of achievable sum SE. Notably, MC simulations verify the analytical results.
	\begin{figure}[!h]
		\begin{center}
			\includegraphics[width=0.95\linewidth]{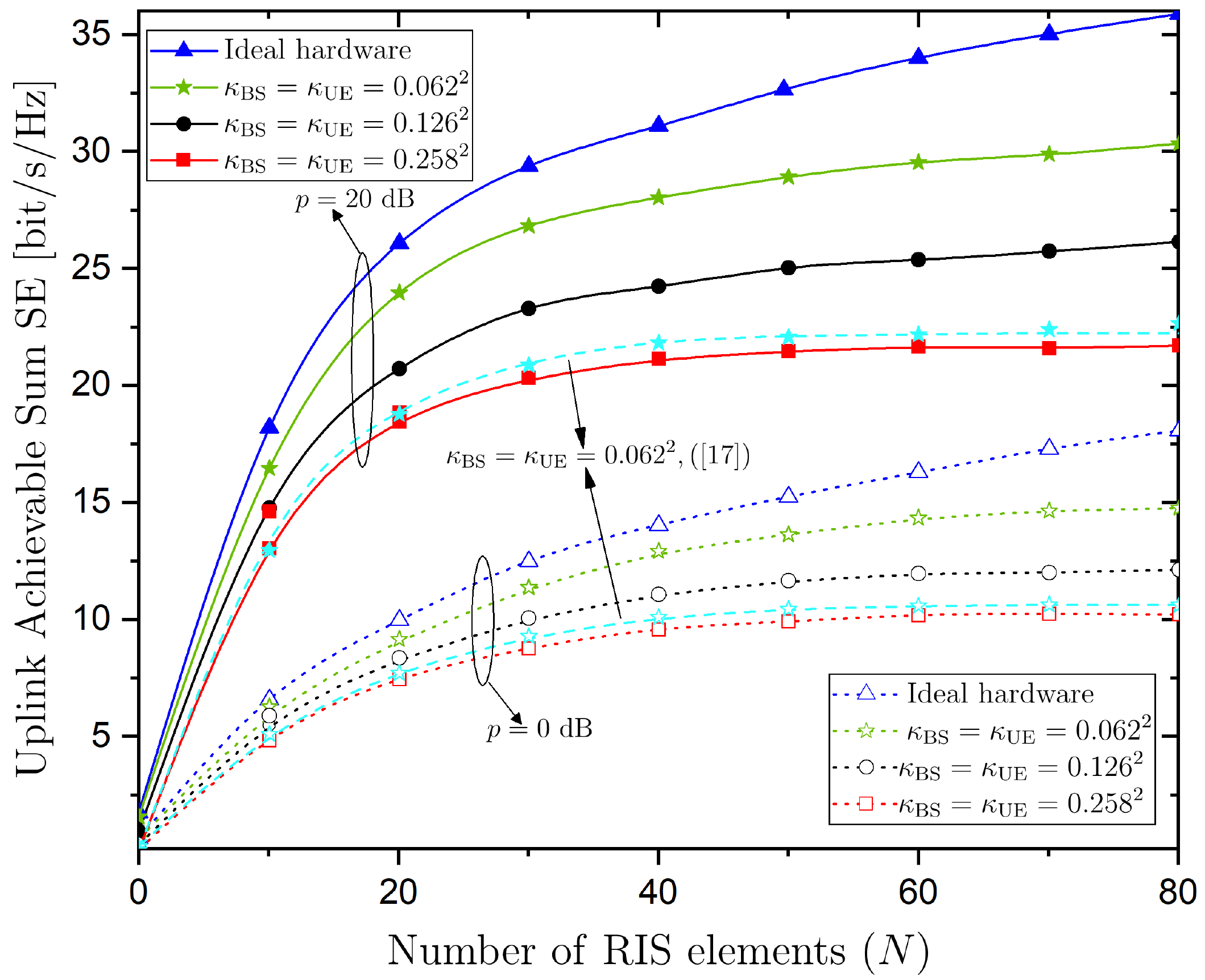}
			\caption{\footnotesize{Uplink achievable sum SE versus the number of IRS elements $N$ of an IRS-assisted MIMO system with imperfect CSI ($ M=16 $, $ K=5 $) for varying T-HWIs $ \kappa_{\mathrm{BS}} $, $\kappa_{\mathrm{UE}}$ and transmit power $ p $. }}
			\label{Fig1}
		\end{center}
	\end{figure}
	
	In Fig. \ref{Fig1}, we depict the achievable sum SE versus the number of IRS elements $ N $ for two different SNR values, $ \rho=0~\mathrm{dB}$ and $ \rho=20~\mathrm{dB}$. Also, we have considered different values of T-HWIs. First, we observe an increase of $ \mathcal{R} $ with $ N $, which increases unboundedly in the case of ideal hardware, but saturates for imperfect hardware met in practice. As expected, the degradation is higher when T-HWIs are more severe, probably, in the case of cheaper hardware used for a cost-efficient implementation. Furthermore, the convergence speed is faster at the higher SNR group because the T-HWIs are power-dependent. Thus, the largest part of the gain is achieved at lower values of $ N $. However, an increase of the IRS elements still allows for a further increase of $ \mathcal{R} $. For the sake of comparison, we have considered the CE from \cite{Nadeem2020} in terms of simulation ("dashed-star" lines) when $ \kappa_{\mathrm{BS}}=\kappa_{\mathrm{UE}}=0.258^{2} $. The achievable sum SE is much lower than the proposed method because of the high training overhead.
	\begin{figure}[!h]
		\begin{center}
			\includegraphics[width=0.95\linewidth]{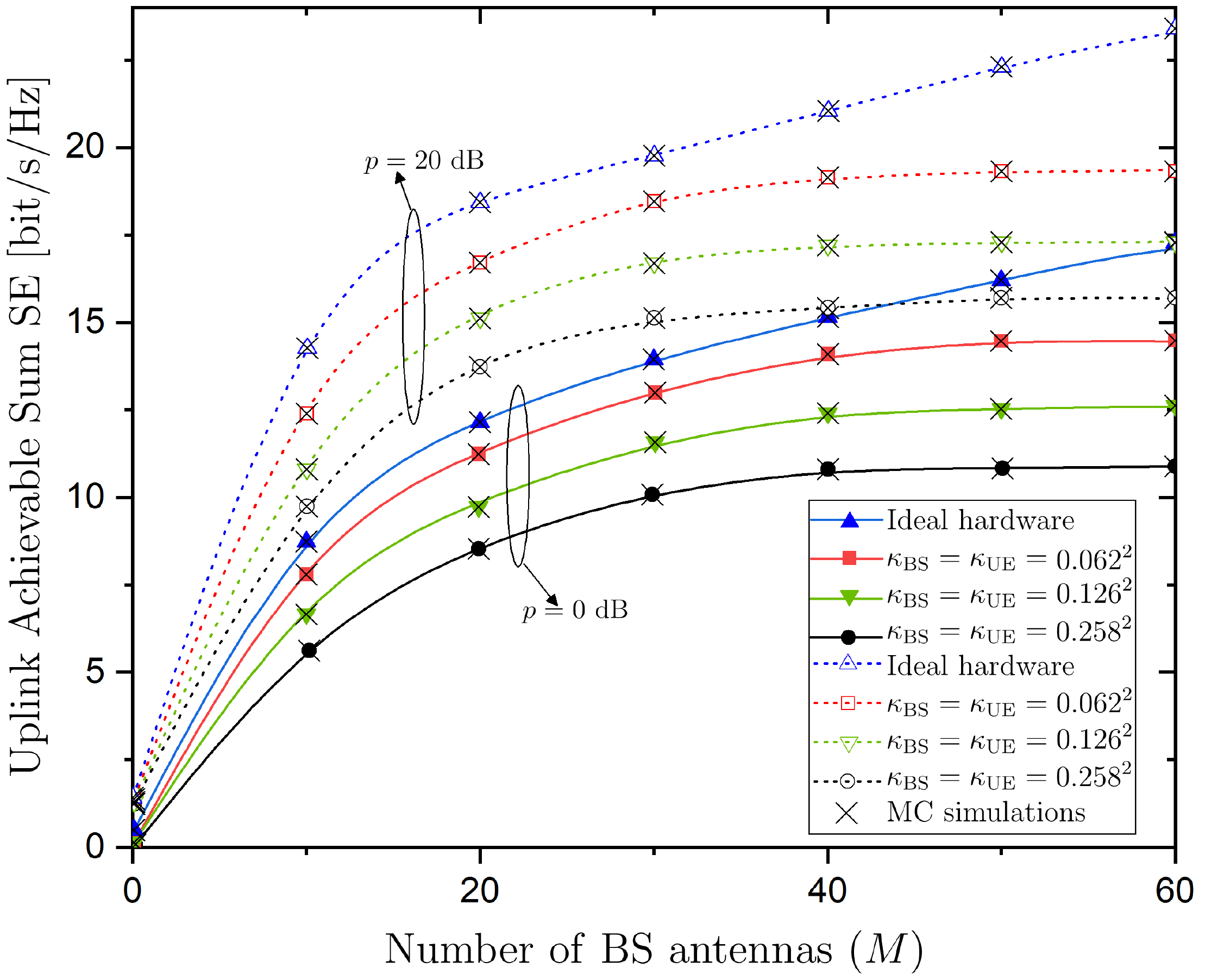}
			\caption{\footnotesize{Uplink achievable sum SE versus the number of BS antennas $M$ of an IRS-assisted MIMO system with imperfect CSI ($ N=60 $, $ K=5 $) for varying T-HWIs $ \kappa_{\mathrm{BS}} $, $\kappa_{\mathrm{UE}}$ and transmit power $ p $ (Analytical results and MC simulations). }}
			\label{Fig2}
		\end{center}
	\end{figure}
	
	Fig. \ref{Fig2} illustrates the achievable sum SE with respect to the number of BS antennas $ M $ for varying T-HWIs and SNR. Notably, this figure resembles with the previous figure, i.e., $ \mathcal{R} $ presents a similar dependence on $ M $ and $ N $. Hence, when $ M $ grows large, the sum SE increases without limit, if perfect hardware is assumed while it appears ceilings in practice, where T-HWIs exist. In fact, lower hardware quality results in larger degradation. In addition, by increasing the SNR, the sum SE becomes larger. Also, the sum SE saturates faster in the instance of a larger SNR ($ 20~\mathrm{dB} $). Thus, these two figures indicate that an IRS-assisted system performs better at higher SNR values, and with larger values of IRS elements and BS antennas. Note that the latter is further appealing since it agrees with the massive MIMO technology of which the implementation has already started. Moreover, the analytical results are accompanied by MC simulations showing the tightness and correctness of the lower bounds since they coincide. 
	\begin{figure}[!h]
		\begin{center}
			\includegraphics[width=0.95\linewidth]{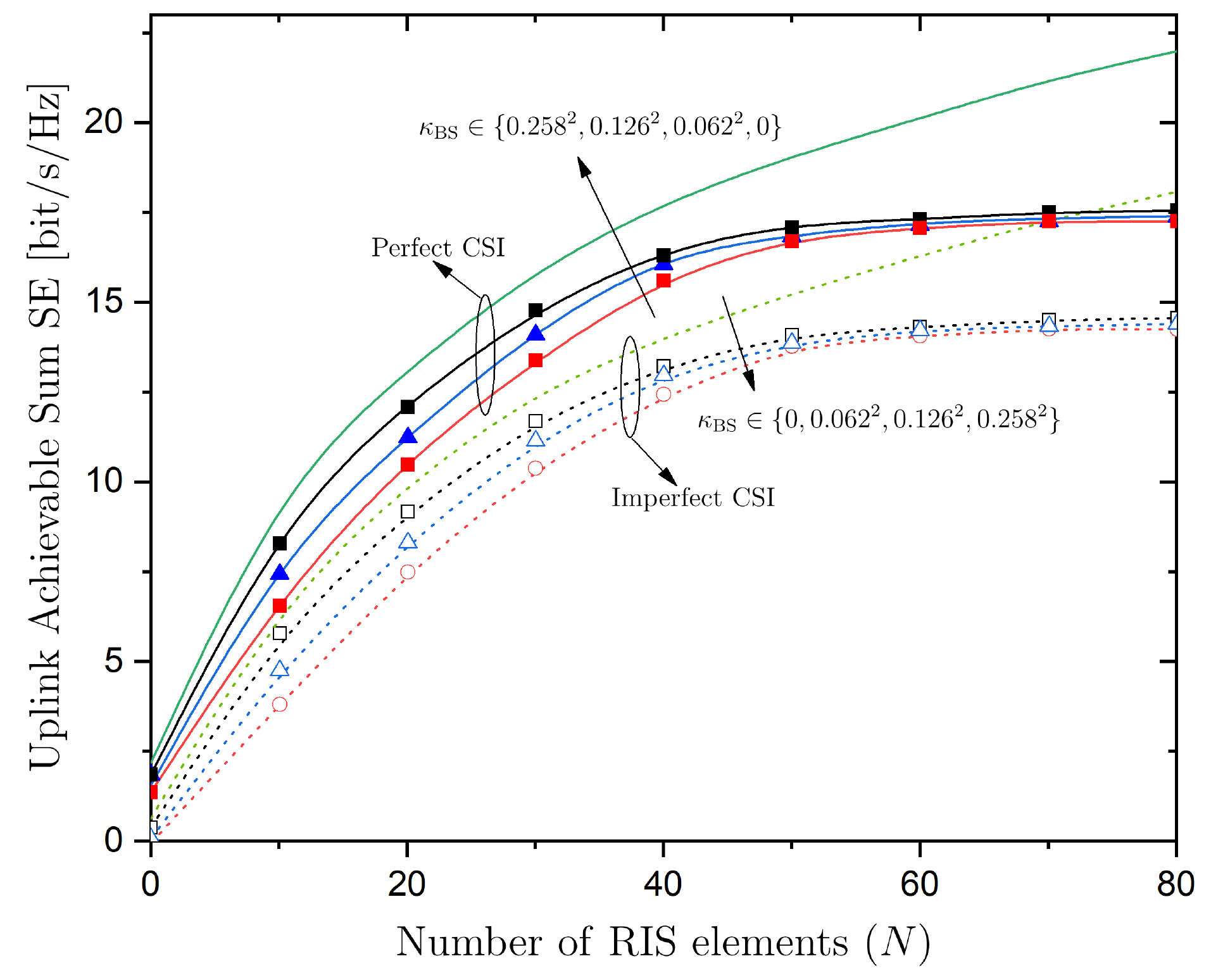}
			\caption{\footnotesize{Uplink achievable sum SE versus the number of IRS elements $N$ of an IRS-assisted MIMO system ($ M=16 $, $ K=5 $) for varying BS distortion $ \kappa_{\mathrm{BS}} $ and both cases of perfect and imperfect CSI. }}
			\label{Fig3}
		\end{center}
	\end{figure}
	
	Fig. \ref{Fig3} shows the achievable sum SE with respect to the number of IRS elements by varying the impact of the BS distortion $ \kappa_{\mathrm{BS}} $ while the distortion at the UE side is assumed zero. Notably, the lines converge to the same finite limit as $ N $ increases, which means that the impact of $ \kappa_{\mathrm{BS}}\ $ becomes negligible when $ N $ increases. In other words, despite the unavoidable existence of imperfect hardware, the IRS is suggested as $ N $ increases. Actually, it allows the use of low-cost hardware. Moreover, in the same figure, we have provided a comparison between perfect and imperfect CSI. We notice that, in both cases, the sum SE increases with the number of IRS elements while, again, the impact of $ \kappa_{\mathrm{BS}} $ becomes negligible at large $ N $. Obviously, in the realistic case of imperfect CSI, the achievable sum SE is lower, and the gap between the lines of ideal and imperfect hardware increases with larger $ N $ because the impact from the estimation error becomes larger.

	\begin{figure}[!h]
		\begin{center}
			\includegraphics[width=0.95\linewidth]{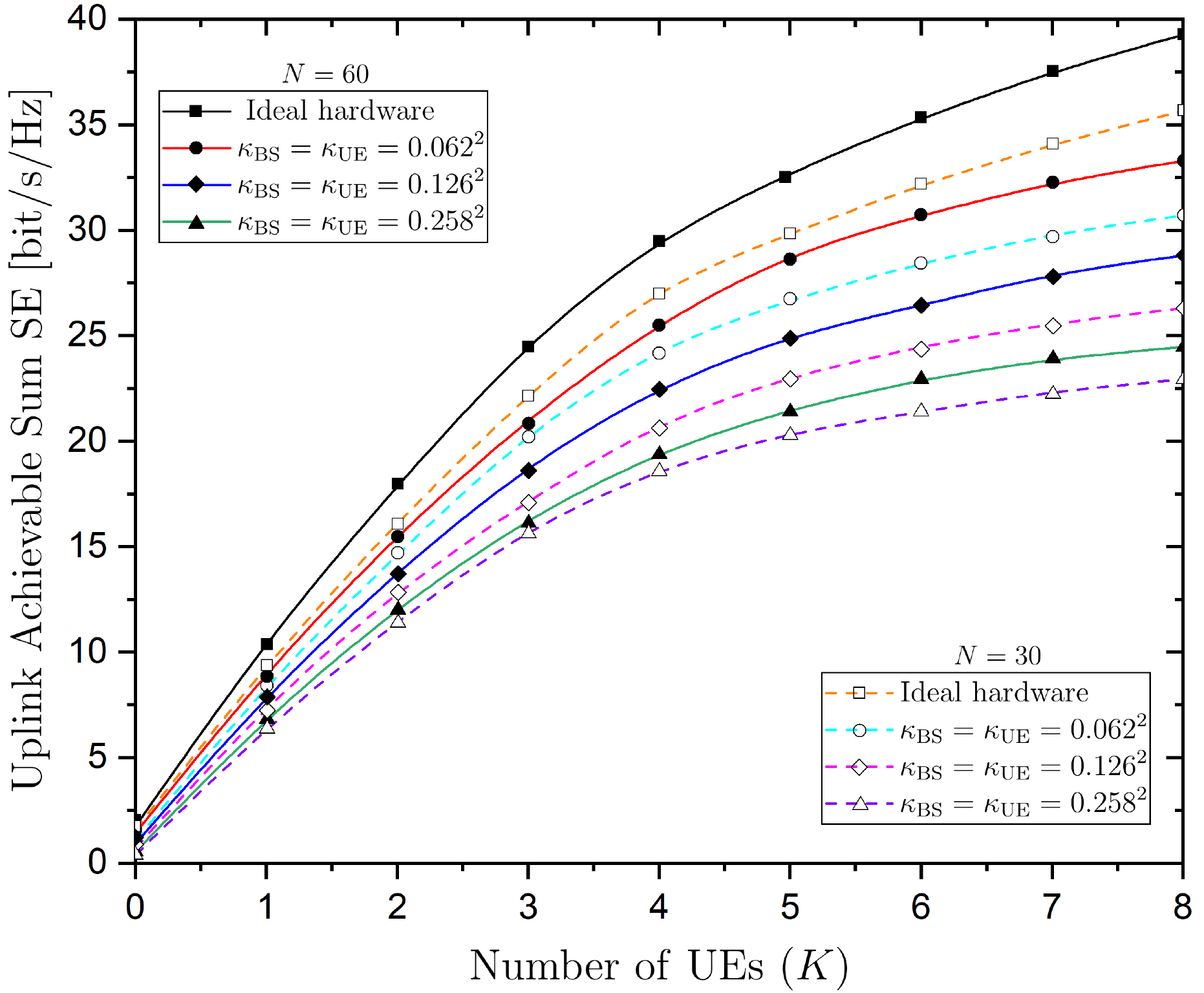}
			\caption{\footnotesize{ Uplink achievable sum SE versus the number of UEs $K$ of an IRS-assisted MIMO system ($ M=16 $,  $ \rho=20~\mathrm{dB}$) for varying T-HWIs $ \kappa_{\mathrm{BS}} $, $\kappa_{\mathrm{UE}}$ and IRS elements  $ N $.}}
			\label{Fig9}
		\end{center}
	\end{figure}

	 Fig. \ref{Fig9}  examines  the achievable sum SE with respect to the number of UEs  by varying the T-HWIs for different numbers of IRS elements.  The sum SE increases with $ K $ almost linearly at the beginning but the gradient decreases as  $ K $ increases. This result is reasonable since the increase of $ K $ increases the multi-user interference and the  received distortion as described by \eqref{eta_rU}. Also, a larger IRS in terms of $ N $ results in a larger sum SE as has been already shown.

	\begin{figure}[!h]
		\begin{center}
			\includegraphics[width=0.95\linewidth]{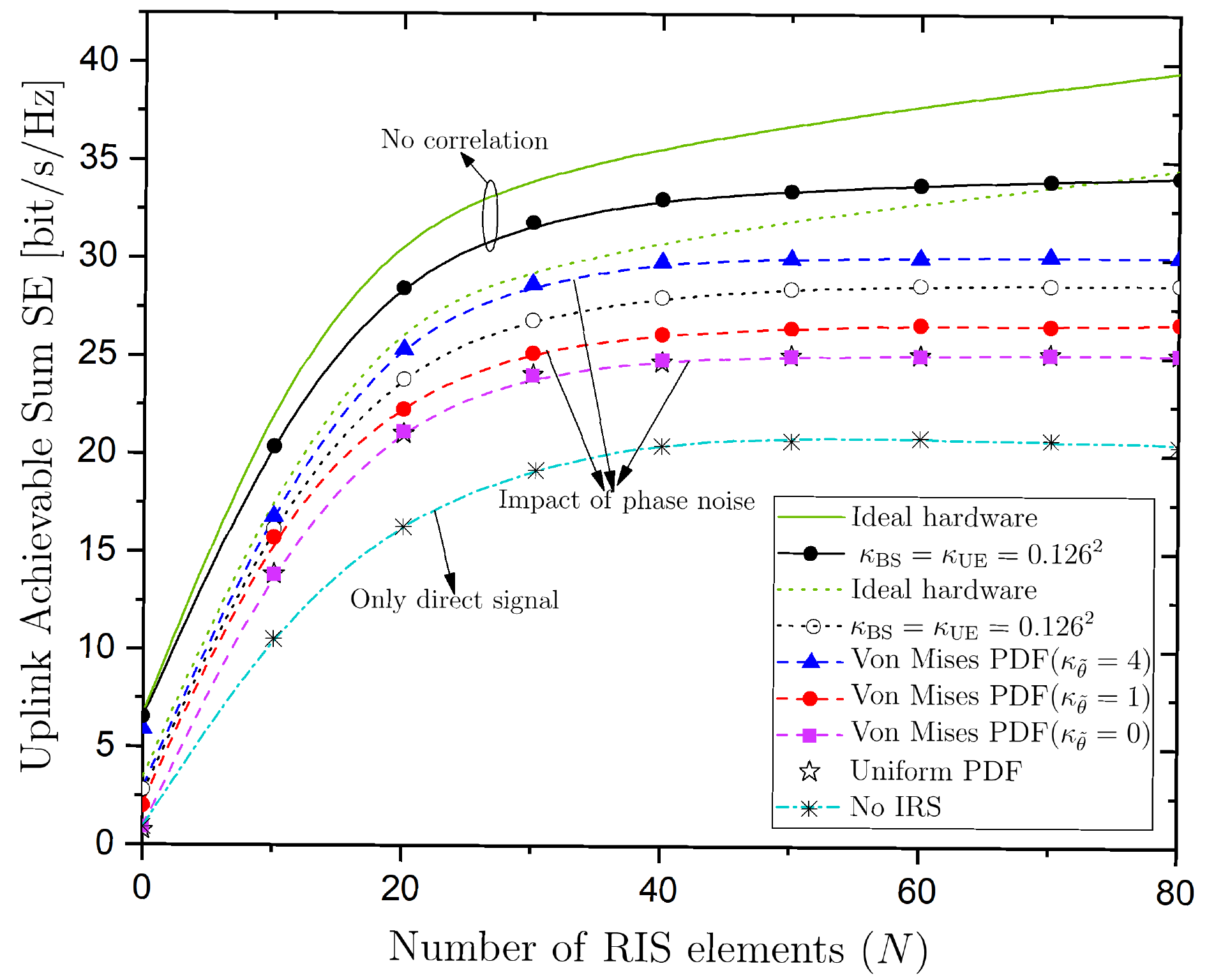}
			\caption{\footnotesize{Uplink achievable sum SE versus the number of IRS elements $N$ of an IRS-assisted MIMO system with imperfect CSI ($ M=16 $, $ K=5 $) for varying IRS-HWIs and T-HWIs in the cases of correlated/non-correlated Rayleigh fading. }}
			\label{Fig4}
		\end{center}
	\end{figure}
	In Fig. \ref{Fig4}, we show the impact of correlated Rayleigh fading on the achievable sum SE in the cases of correlation at both the BS and the IRS while varying the number of IRS elements. Also, we vary the quality of T-HWIs, and we show how for the same T-HWIs at the BS and UE sides, the sum SE decreases in the case of correlated fading ("dot" lines) with comparison to no correlation ("solid" lines). Furthermore, this figure allows shedding light on the impact of phase noise at the IRS. In particular, when uniform phase noise is assumed ("star" symbols), $ \mathcal{R} $ is the lowest because the design cannot take benefit (random fluctuations) from the IRS optimization since $ \bR_k $ does not depend on the phase matrix (see Rem. \ref{rem5}). However, in the case that the phase noise is distributed according to the Von Mises distribution ("dashed" lines), the achievable sum SE increases because the presence of the IRS becomes advantageous since it can adjust the phase shifts of its passive elements towards better performance. Especially, we have considered variation of the concentration parameter $ \kappa_{\tilde{\theta}} $. As $ \kappa_{\tilde{\theta}} $ decreases, the achievable sum SE decreases. Actually, we show that when $ \kappa_{\tilde{\theta}} =0$, the corresponding line coincides with the line describing the uniform PDF since the Von Mises PDF coincides with the uniform distribution in this case.  Note that we have also depicted the performance  in the absence of the IRS.   From the figure, we can verify our observation in Rem. \ref{rem1} explaining that the IRS contributes to the performance even in the worst-case IRS phase noise scenario since the line describing the case with no IRS is lower, i.e., the performance is worse.
	
	\begin{figure}[!h]
		\begin{center}
			\includegraphics[width=0.95\linewidth]{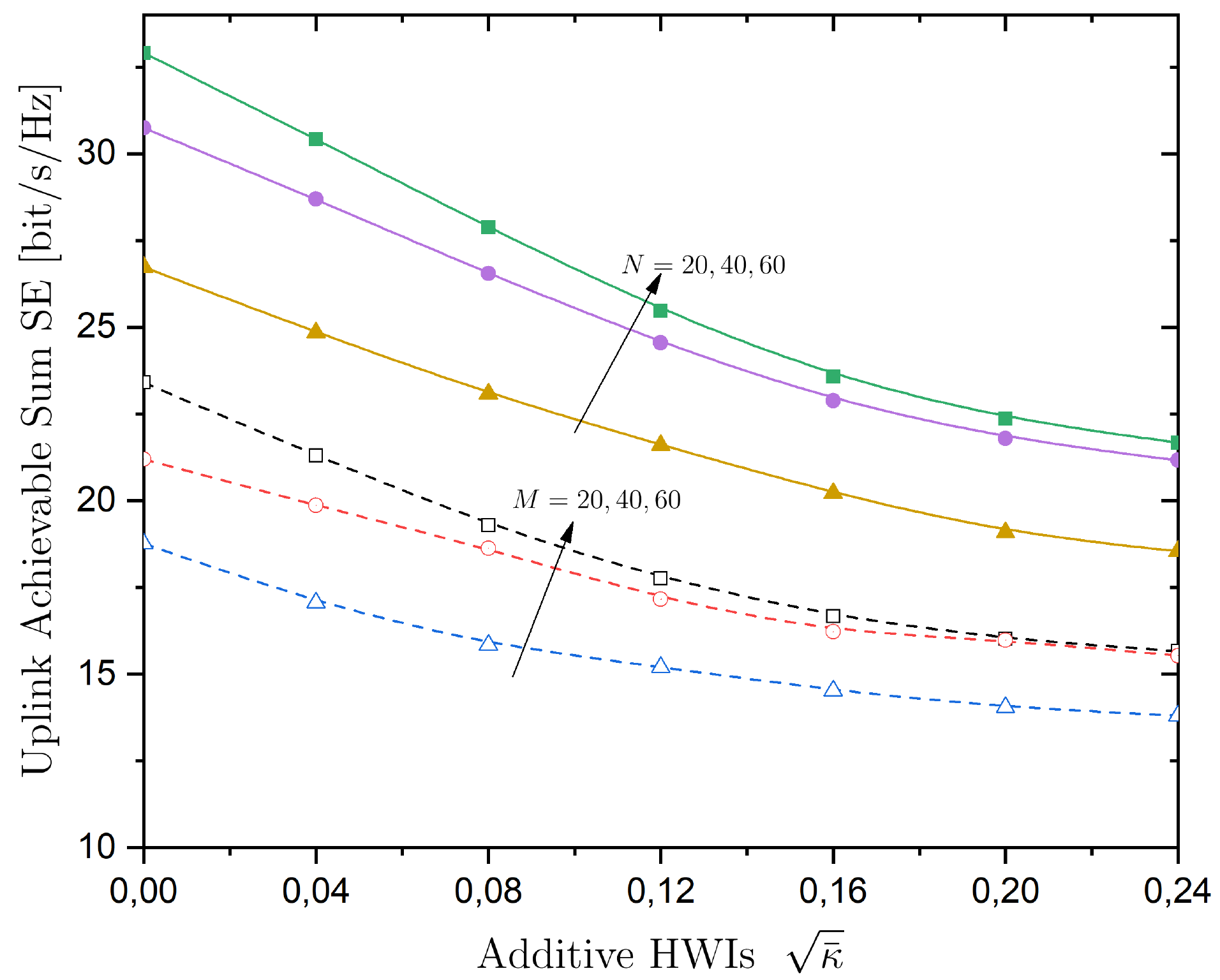}
			\caption{\footnotesize{ Uplink achievable sum SE versus the T-HWIs quality $ \sqrt[]{\bar{\kappa}} $   ($ \rho=20~\mathrm{dB}$, $ K=5 $) for varying BS antennas $ M $ and IRS elements $ N $. }}
			\label{Fig8}
		\end{center}
	\end{figure}

	 	Fig.~\ref{Fig8} shows  the performance of the achievable sum SE versus $ \sqrt[]{\bar{\kappa}} $, where $ \kappa_{\mathrm{BS}}=\bar{\kappa}  $ and $ \kappa_{\mathrm{UE}}=\bar{\kappa} +0.03 $. We have assumed that the distortion at the UE is larger due to its simpler hardware. The impact of IRS phase noise  is not considered as we  focus on  the impact of the T-HWIs.  The "solid" and "dashed" lines correspond to the variation of $ M $ BS antennas and $ N $ IRS elements, respectively. The horizontal axis starts from the case of no T-HWIs  at the BS, i.e., when $ \bar{\kappa}=0 $ and ends with severe additive HWIs. We observe that as T-HWIs increase, the performance decreases. Moreover, we observe that at severe T-HWIs the variations of the $ M $ and $ N $ do not affect the performance. Also, we notice that   the number of  IRS elements $ N $ affects more   the performance (higher sum SE) than the number of BS antennas $ M $ while the same variation with respect to $ N $ has a larger impact since the gaps between the solid lines are larger.

	\begin{figure}%
		\centering
		\subfigure[]{	\includegraphics[width=0.95\linewidth]{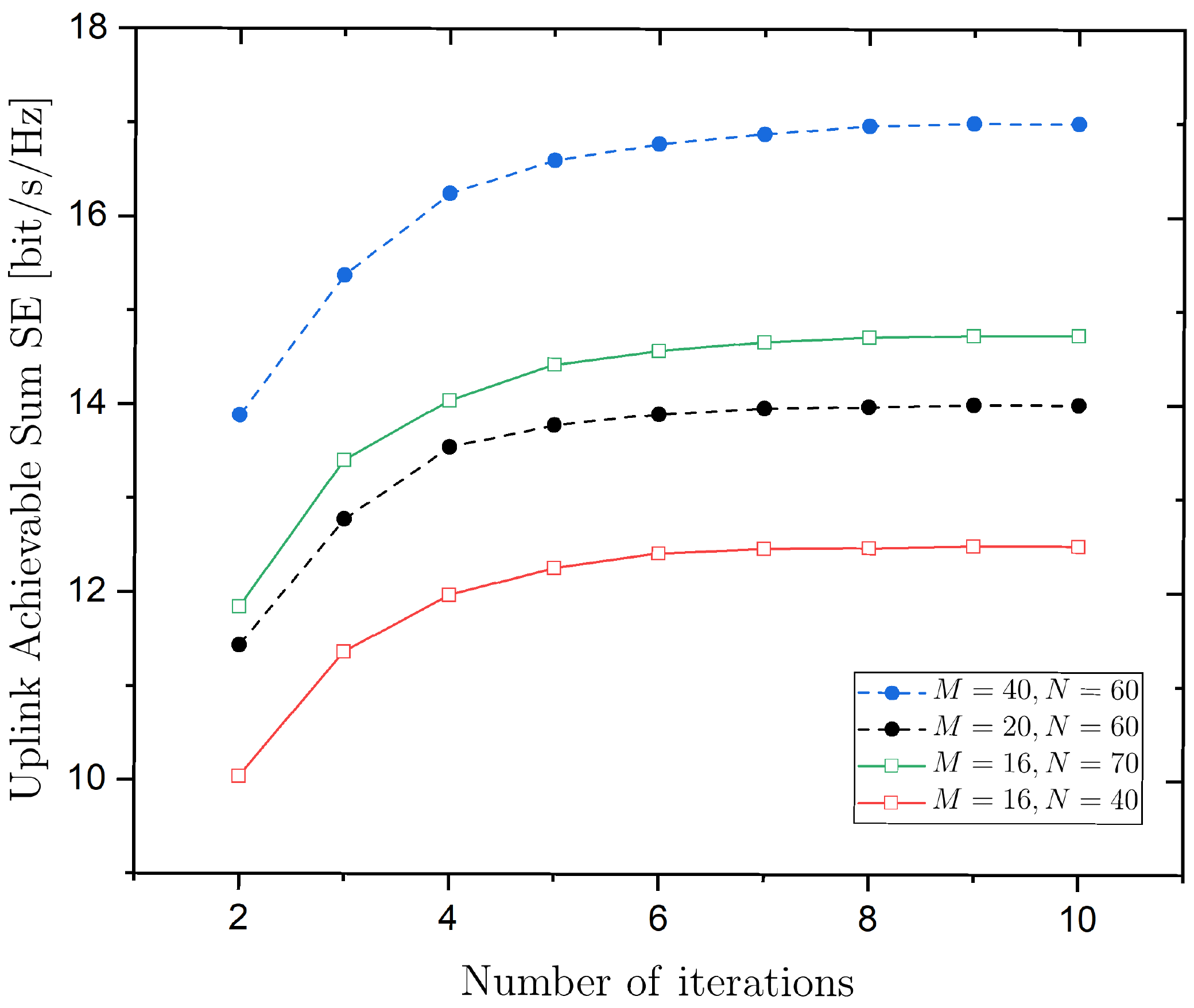}}\qquad
		\subfigure[]{	\includegraphics[width=0.95\linewidth]{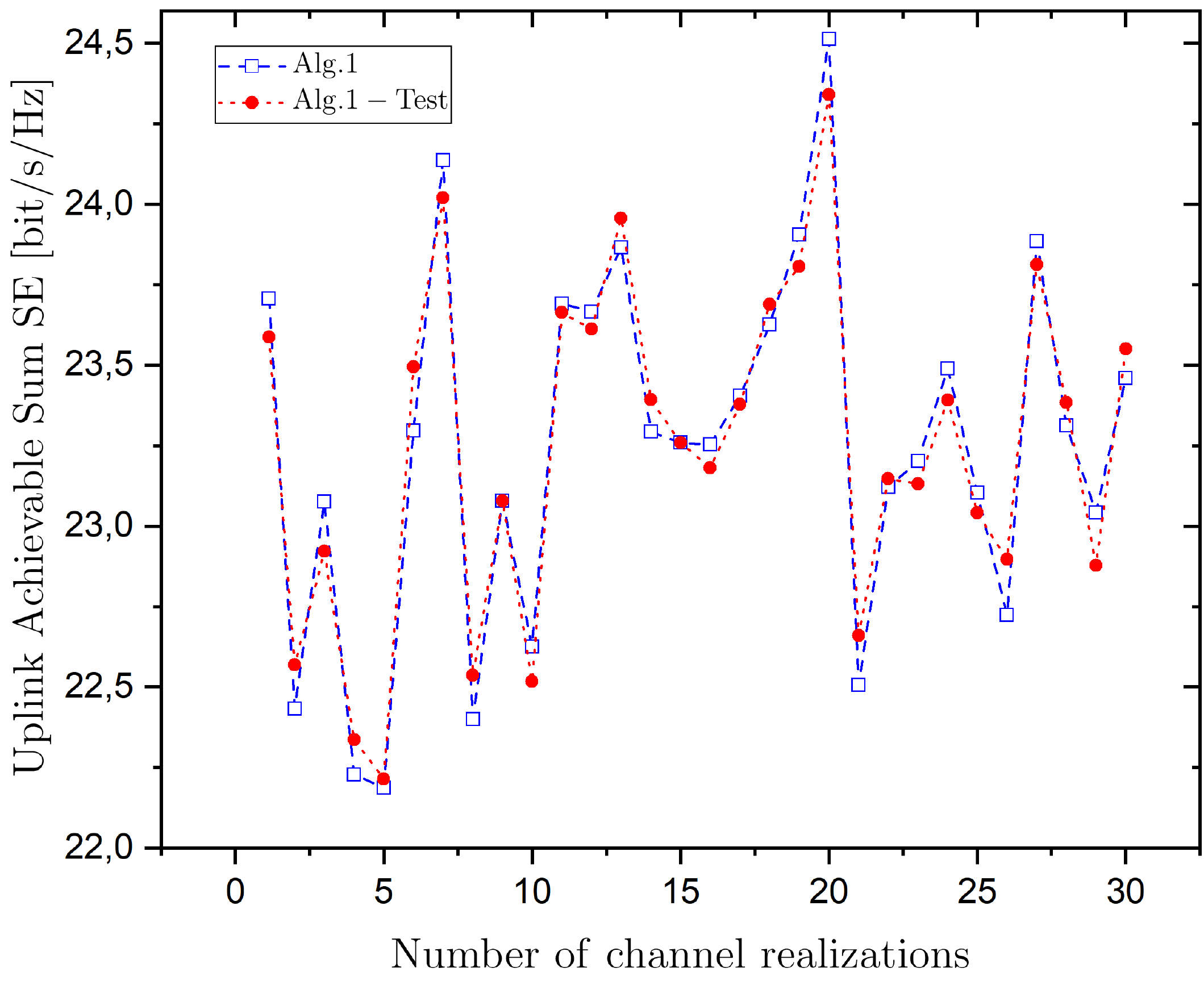}}\\
		\caption{ Uplink achievable sum SE of an IRS-assisted MIMO system with imperfect CSI  versus: (a) the number of iterations $N$ ( $ K=5 $, $ p=0\mathrm{dB }$, $ \kappa_{\mathrm{BS}}=\kappa_{\mathrm{UE}}=0.126^2 $) for varying BS antennas $ M $ and IRS elements $ N $; (b) $ 30 $ channel realizations   ($ M=16 $, $ N=20 $, $ K=5 $, $ p=20\mathrm{dB }$, $ \kappa_{\mathrm{BS}}=\kappa_{\mathrm{UE}}=0.126^{2} $).}
		\label{fig9}
	\end{figure}
	
	 In Fig. \ref{fig9}.(a), we show the convergence of the proposed algorithm, i.e., Algorithm \ref{Algoa1}. In particular, we have depicted the uplink achievable sum SE versus the number of iterations for various sets of BS antennas and IRS elements. Notably, the algorithm converges fast in all cases. For example, when $ M=20 $ and $ N=60 $, the algorithm converges in $ 7 $ iterations. Moreover, we notice that by increasing the IRS and BS sizes in terms of their elements and antennas, respectively, more iterations are required for convergence because the amount of optimization variables increases and the relevant search space is enlarged. On top of this, an increase in terms of BS antennas or IRS elements results in higher complexity of each iteration of the proposed algorithm as mentioned in Sec. \ref{IRSdesign}.
	
	 The non-convexity of the optimization problem suggests that its solution depends on the initial point, i.e., different initial points result in different locally optimal solutions. Fig. \ref{fig9}.(b) investigates this dependence on the initializations by accounting for $ 30 $ channel realizations. The initialization of Alg. 1 assumes that $ \bs^{0} =\exp\left(j\pi/2\right)\one_{N}$ as mentioned in its description. "Alg. 1-Test" in the figure assumes the best initial point out of $ 100 $ random initial points for each channel instance. The figure shows that different initializations result in different solutions and that the sum SE in both cases is almost the same, which means that this  phase shifts selection for initialization is a good choice.

	\section{Conclusion} \label{Conclusion} 
	In this paper, not only we studied the impact of both T-HWIs and IRS-HWIs on a general IRS-assisted MU-MISO system with imperfect CSI and correlated Rayleigh fading, but we also proposed a novel optimization methodology regarding the optimization of the RBM with low computational cost, being quite useful in IRS-assisted systems that have a large number of elements. In particular, we obtained the LMMSE estimate of the channel with T-HWIs and IRS-HWIs. Moreover, we derived the uplink achievable sum SE with MRC in closed form, being dependent only on large-scale statistics, and performed high computationally efficient optimization with respect to the IRS RBM. In general, we provided a methodology resulting in analytical and tractable expressions being advantageous over previous works  as shown by the simulation results. 	Furthermore, we evaluated the impact of HWIs at both the transceiver and the IRS on the system SE, and shed insightful light on their interplay with other system parameters towards efficient IRS design. Remarkably, this work opens new research directions for IRS-assisted systems such as the studies of energy efficiency and power scaling laws with imperfect CSI and T-HWIs.
	\begin{appendices}
		
		\section{Proof of Proposition~\ref{PropositionDirectChannel}}\label{Proposition1}
		According to \cite[Ch. 12]{Kay} The LMMSE estimator of $ \bh_{k} $ is obtained by $ \hat{\bh}_{k} =\bF\br_{k}$, 
		where $\bF $ is derived my minimizing $ \tr\big(\EE\big[\left(\hat{\bh}_{k}-{\bh}_{k}\right)\left(\hat{\bh}_{k}-{\bh}_{k}\right)^{\H}\big]\big) $ as
		\begin{align}
			\bF=\EE\left[\br_{k}\bh_{k}^{\H}\right]\left(\EE\left[\br_{k}\br_{k}^{\H}\right]\right)^{-1}.\label{Cor6}
		\end{align}
		Given that the additive distortions and the receiver noise are uncorrelated with overall channel $ \bh_{k} $, the first term of \eqref{Cor6} becomes
		\begin{align}
			\EE\left[\br_{k}\bh_{k}^{\H}\right]&=\EE\big[\big(\bh_{k}+ \sum_{i=1}^{K}\frac{ \tilde{\delta}_{\mathrm{t},i}}{\tau P}\bh_{i}+ \frac{\tilde{\deltav}_{\mathrm{r}} +\bw_{k}}{ \tau P}\big)\bh_{k}^{\H}\big]\\
			&=\EE\left[\bh_{k}\bh_{k}^{\H}\right]=\bR_{k}.\label{Cor0}
		\end{align}
		Regarding the second term, we have 
		\begin{align}
			\EE\left[\br_{k}\br_{k}^{\H}\right]&=\bR_{k}+ \frac{\kappa_{\mathrm{UE}}}{\tau }\sum_{i=1}^{K} \bR_{i}+ \frac{\kappa_{\mathrm{BS}}}{\tau}\!\sum_{i=1}^{K}\!\Id_{M}\!\circ\! \bR_{i} +\frac{\sigma^2}{ \tau P }\Id_{M},\label{Cor1}
		\end{align}
		where we have taken into account that the additive distortions and the receiver noise are uncorrelated with each other. Also, we have used that the variance of $ \tilde{\deltav}_{\mathrm{r}} $ is $ \tau P { \bm \Upsilon}$ with ${ \bm \Upsilon}=\kappa_{\mathrm{BS}}P\sum_{i=1}^{K} \tr\left(\bR_{i}\right)$.
		As a result, the LMMSE estimate is given by inserting \eqref{Cor0} and \eqref{Cor1} into \eqref{Cor6} as 
		\begin{align}
			\hat{\bh}_{k}=\bR_{k}\bQ_{k}\br_{k}.
		\end{align}
		Furthermore, the covariance matrix of the estimated channel is obtained as
		\begin{align}
			\EE\left[\hat{\bh}_{k}	\hat{\bh}_{k}^{\H}\right]=\bR_{k}\bQ_{k}\bR_{k}.
		\end{align} 
		\section{Proof of Theorem~\ref{theorem:ULDEMMSE}}\label{theorem1}
		For the derivation of $\gamma_{k}$ for finite $M$, we recall each term of~\eqref{sig11} and \eqref{int1}. Generally, we are going to apply a useful property suggesting that $\bx^{\H}\by = \tr(\by \bx^{\H})$ for any vectors $\bx$, $\by$. First, we obtain the $ S_{k} $ given by~\eqref{sig11}. Specifically, the desired signal part $ \mathrm{DS}_{k} $ (without $ \rho_{k} $) is written as
		\begin{align}
			\EE\big[
			\hat{\bh}_{k}^{\H}{\bh}_{k}\big]&=\tr\big( \EE\left[{\bh}_{k} \hat{\bh}_{k}^{\H} \right] \big) \\
			&=\tr\left( \EE\left[{\bh}_{k} \br_{k}^{\H}\bQ_{k} \bR_{k}\right] \right)\label{term1}\\
			&=\tr\left(\bPsi_{k}\right)\label{term2},
		\end{align}
		where, in \eqref{term1}, we have used \eqref{estim1}, while the last step is obtained by applying the expectation between $ \bh_{k} $ and $ \br_{k} $ and by considering that the mean value of $ \tilde{\delta}_{\mathrm{t},k} $ is zero.
		
		Regarding the second-order moment in the denominator, expressing the MU interference part $ \mathrm{MUI}_{ik} $ for $i\ne k$, it is written as
		\begin{align}
			&\EE\big[ \big| \hat{\bh}_{k}^{\H}{\bh}_{i}\big|^{2}\big]=\tr\!\left(\bPsi_{k} \bR_{i}\right),\label{54}
		\end{align}
		which relies on the   independence between the two random vectors.
		
		
		For the power of the beamforming uncertainty, we have 
		\begin{align}
			\EE\left\{|\mathrm{BU}_{k}|^{2}\right\} &=\EE\big[ \big| \hat{\bh}_{k}^{\H}{\bh}_{k}-\EE\big[
			\hat{\bh}_{k}^{\H}{\bh}_{k}\big]\big|^{2}\big]\label{est0}\\
			&=\EE\big[ \big| \hat{\bh}_{k}^{\H}{\bh}_{k}\big|^{2}\big]-\big|\EE\big[
			\hat{\bh}_{k}^{\H}{\bh}_{k}\big]\big|^{2} \label{est2}\\
			&=\EE\big[ \big| \hat{\bh}_{k}^{\H} \hat{\bh}_{k} +\hat{\bh}_{k}^{\H}\tilde{\bh}_{k}\big|^{2}\big]-\big|\EE\big[
			\hat{\bh}_{k}^{\H}\hat{\bh}_{k}\big]\big|^{2}\label{est3} \\
			&=\EE\big[|\hat{\bh}_{k}^{\H}\tilde{\bh}_{k}|^{2}\big] \label{est5}\\
			&=\tr\!\left(\bPsi_{k} \bR_{k}\right)-\tr\left( \bPsi_{k}^{2}\right),\label{est4}
		\end{align}
		where in~\eqref{est3}, we have substituted \eqref{current}, and in~\eqref{est3}, we have applied the property $\EE\left[ |X+Y|^{2}\right] =\EE\left[ |X|^{2}\right] +\EE\left[ |Y^{2}|\right]$, which holds between two independent random variables when one of them has zero mean value, e.g., $\EE\left[ X\right]=0 $. Equation \eqref{est4} follows from the facts that $ \EE\big[|\hat{\bh}_{k}^{\H}\tilde{\bh}_{k}|^{2}\big]=\bPsi_{k}\left(\bR_{k}-\bPsi_{k}\right) $ by taking advantage of the independence between the two random vectors.
		The derivation of the term $\sigma_{\mathrm{UE}}^{2}$, corresponding to the additive transmit distortion from all UEs, is straightforward. Specifically, we have
		\begin{align}
			\sigma_{\mathrm{UE}}^{2}&= \sum_{i=1}^{K}\rho_{i}\kappa_{\mathrm{UE}}\EE\big\{|\bv_{k}^{\H}\bh_{i}|^{2} \big\}\label{simgaue1}\\
			&=\kappa_{\mathrm{UE}}\left( \rho_{k}\EE\big\{|\bv_{k}^{\H}\bh_{k}|^{2} \right\} +\sum_{i\ne k}^{K}\rho_{i}\EE\left\{|\bv_{k}^{\H}\bh_{i}|^{2} \right\}\!\big).\label{simgaue}
		\end{align}
		
		In \eqref{simgaue1}, we have taken the expectation with respect to the transmit distortion for fixed channel realizations. The first part in \eqref{simgaue} is obtained as
		\begin{align}
			\EE\left\{|\bv_{k}^{\H}\bh_{k}|^{2} \right\}&=\EE\big[ \big| \hat{\bh}_{k}^{\H} \hat{\bh}_{k} +\hat{\bh}_{k}^{\H}\tilde{\bh}_{k}\big|^{2}\big]\\
			&=\EE\big[|\hat{\bh}_{k}^{\H}\hat{\bh}_{k}|^{2}\big]+\EE\big[|\hat{\bh}_{k}^{\H}\tilde{\bh}_{k}|^{2}\big]\\
			&=|\tr\!\left(\bPsi_{k}\right)|^{2}+\tr\!\left(\bPsi_{k} \bR_{k}\right)-\tr\left( \bPsi_{k}^{2}\right),\label{esto1}
		\end{align}
		where we have used similar steps as before. Hence, inserting \eqref{esto1} into \eqref{simgaue}, and noticing that the second part of \eqref{simgaue} is identical to $ \mathrm{MUI}_{ik} $, $ \sigma_{\mathrm{UE}}^{2} $ becomes 
		\begin{align}
			\!\!\sigma_{\mathrm{UE}}^{2}\!=\!\kappa_{\mathrm{UE}}\big(\rho_{k}|\tr\!\left(\bPsi_{k}\right)|^{2}+\sum_{i=1}^{K}\!\rho_{i}  \tr\!\left(\bPsi_{k} \bR_{i}\right)-\rho_{k}\tr\left( \bPsi_{k}^{2}\right)\!\big)\!.\label{sue1}
		\end{align}
		
		The term $\sigma_{\mathrm{BS}}^{2}$, concerning the additive receive distortion at the BS, is obtained as
		\begin{align}
			\sigma_{\mathrm{BS}}^{2}
			&=\bkappa_{\mathrm{BS}}\EE\big\{\hat{\bh}_{k}^{\H}\big(\sum_{i=1}^{K}\rho_{i} \Id_{M}\circ \bh_{i}\bh_{i}^{\H}\big)\hat{\bh}_{k}\big\}\label{bs1} \\
			&=\underbrace{\bkappa_{\mathrm{BS}}\EE\big\{\hat{\bh}_{k}^{\H}\big(\rho_{k} \Id_{M}\circ \bh_{k}\bh_{k}^{\H}\big)\hat{\bh}_{k}\big\}}_{\mathcal{I}_1}\nn\\&+\underbrace{\bkappa_{\mathrm{BS}}\EE\big\{\hat{\bh}_{k}^{\H}\big(\sum_{i\ne k}^{K}\rho_{i} \Id_{M}\circ \bh_{i}\bh_{i}^{\H}\big)\hat{\bh}_{k}\big\}}_{\mathcal{I}_2}\label{bs2},
		\end{align}
		where, 
		in \eqref{bs1}, we have taken the expectation with respect to the receive distortion for fixed channel realizations, we have accounted for MRC, i.e, $ \bv_{k}=\bh_{k} $, and we have used the Hadamard product to write the diagonal matrix. In the next equation, we have simply split the sum and denoted the two parts as $ \mathcal{I}_1 $ and $ \mathcal{I}_2 $. In the case of the former part, we have
		\begin{align}
			\mathcal{I}_1&=\bkappa_{\mathrm{BS}}\EE\!\big\{ \!\tr\! \big(\!\big(\rho_{k} \Id_{M}\circ \big(\hat{\bh}_{k}\hat{\bh}_{k}^{\H}+\tilde{\bh}_{k}\tilde{\bh}_{k}^{\H}\big)\!\big)\!\hat{\bh}_{k}\hat{\bh}_{k}^{\H}\big)\!\!\big\}\label{key}\\
			&=\underbrace{\bkappa_{\mathrm{BS}} \EE\!\big\{ \!\tr\! \big(\!\big(\rho_{k} \Id_{M}\circ \hat{\bh}_{k}\hat{\bh}_{k}^{\H}\!\big)\!\hat{\bh}_{k}\hat{\bh}_{k}^{\H}\big)\!\!\big\}}_{\mathcal{I}_{11}}\nn\\&+\underbrace{\bkappa_{\mathrm{BS}} \EE\!\big\{ \!\tr\! \big(\!\big(\rho_{k} \Id_{M}\circ \tilde{\bh}_{k}\tilde{\bh}_{k}^{\H}\!\big)\!\hat{\bh}_{k}\hat{\bh}_{k}^{\H}\big)\!\!\big\}}_{\mathcal{I}_{12}},\label{I1}
		\end{align}
		where, at first, we have used \eqref{current} and exploited that $ \hat{\bh}_{k} $ and $ \tilde{\bh}_{k} $ are uncorrelated. Next, we have split \eqref{I1} into $ \mathcal{I}_{11} $ and $ \mathcal{I}_{12} $. For $ \mathcal{I}_{11} $, we exploit that the diagonal matrix can be written as $ \Id_{M}\circ \hat{\bh}_{k}\hat{\bh}_{k}^{\H}= \sum_{m=1}^{M}|\bee_{m}^{\H} \hat{\bh}_{k} |^{2}\bee_{m}\bee_{m}^{\H}$, where $ \bee_{m} $ is the $ m $th column of $ \Id_{M} $ \cite{Bjornson2015}. Hence, we have
		\begin{align}
		&	\mathcal{I}_{11}=\bkappa_{\mathrm{BS}} \rho_{k}\sum_{m=1}^{M} \EE\!\big\{ \!\tr\! \big(\bee_{m}\bee_{m}^{\H} \hat{\bh}_{k}\hat{\bh}_{k}^{\H} \bee_{m}\bee_{m}^{\H}\hat{\bh}_{k}\hat{\bh}_{k}^{\H}\big)\!\!\big\}\label{I111}\\
			&=\bkappa_{\mathrm{BS}}\rho_{k}\sum_{m=1}^{M} \EE\!\big\{\! | \hat{\bh}_{k}^{\H} \bee_{m}\bee_{m}^{\H} \hat{\bh}_{k}|^{2}\!\big\}\label{I112}\\
			&=\bkappa_{\mathrm{BS}}\rho_{k} \sum_{m=1}^{M} \big(| \tr \big( \bee_{m} \bee_{m}^{\H } \bPsi_{k}\big)|^{2}\! +\!\tr \big(\bee_{m} \bee_{m}^{\H } \bPsi_{k} \bee_{m} \bee_{m}^{\H } \bPsi_{k}\big)\big)\label{I113}\\
			&=\bkappa_{\mathrm{BS}} \rho_{k}\big(| \tr \big( \Id_{M}\circ \bPsi_{k}\big)|^{2}\! +\!\tr \big(\big(\Id_{M}\circ \bPsi_{k}\big) \bPsi_{k}\big)\big)\label{I114},
		\end{align}
		where, in \eqref{I113}, we have used \cite[Lemma~2]{Bjornson2015}. In the next equation, we have reverted the matrix expansion. More easily, in the case of $ \mathcal{I}_{12} $, we have
		\begin{align}
			\mathcal{I}_{12}&=\bkappa_{\mathrm{BS}} \rho_{k}\sum_{m=1}^{M} \EE\!\big\{ \!\tr\! \big(\bee_{m}\bee_{m}^{\H} \tilde{\bh}_{k}\tilde{\bh}_{k}^{\H} \bee_{m}\bee_{m}^{\H}\hat{\bh}_{k}\hat{\bh}_{k}^{\H}\big)\!\!\big\}\label{I121}\\
			&=\bkappa_{\mathrm{BS}}\rho_{k} \tr \big(\!\big( \Id_{M}\circ \big(\bR_{k}-\bPsi_{k}\big)\!\big)\bPsi_{k}\big)
			\label{I122},
		\end{align}
		where, in the first equation, we have used the diagonal matrix expansion, and in the second equation, we have reverted this expansion after taking advantage of the independence between $ \tilde{\bh}_{k} $ and $ \hat{\bh}_{k} $. Substitution of \eqref{I114} and \eqref{I122} into \eqref{I1} gives $\mathcal{I}_1 $ after simple algebraic manipulations as
		\begin{align}
			\mathcal{I}_1=\bkappa_{\mathrm{BS}}\rho_{k} \big(| \tr \big( \Id_{M}\circ \bPsi_{k}\big)|^{2}\! +\!\tr \big(\big(\Id_{M}\circ \bR_{k}\big) \bPsi_{k}\big)\big)\label{i1final}.
		\end{align}
		
		Regarding $ \mathcal{I}_2 $, it follows that
		\begin{align}
			\mathcal{I}_2&=\bkappa_{\mathrm{BS}}\EE\big\{\!\!\tr\!\big(\!\!\big(\sum_{i\ne k}^{K}\rho_{i} \Id_{M}\circ \bh_{i}\bh_{i}^{\H}\!\big)\!\hat{\bh}_{k}\hat{\bh}_{k}^{\H}\big)\!\!\big\}\label{bs3}\\
			&=\bkappa_{\mathrm{BS}}\sum_{i\ne k}^{K}\rho_{i}\tr\big(\EE\big\{ \left(\Id_{M}\circ \bh_{i}\bh_{i}^{\H}\right)\!\hat{\bh}_{k}\hat{\bh}_{k}^{\H}\big\}\big)\label{bs4}\\
			&=\bkappa_{\mathrm{BS}}\sum_{i\ne k}^{K}\rho_{i}\tr\left(\!\left(\Id_{M}\circ \bR_{i} \right)\!\bPsi_{k}\right).\label{bs5}
		\end{align}
		
		In the first equality, $ \mathcal{I}_2 $ has been written in terms of the trace. Next, we have exchanged the orders among summation, trace, and expectations since they are linear operators. In \eqref{bs5}, given the independence between indices $ i $ and $ k $, we have computed the separate expectations.
		Having obtained $ \mathcal{I}_1 $ and $ \mathcal{I}_2 $, we replace them in \eqref{bs2} to obtain $ \sigma_{\mathrm{BS}}^{2} $ as
		\begin{align}
			\sigma_{\mathrm{BS}}^{2}=\bkappa_{\mathrm{BS}} \big(\rho_{k}| \tr \left( \Id_{M}\circ \bPsi_{k}\right)|^{2}\! +\sum_{i=1}^{K}\!\rho_{i}\tr \left(\left(\Id_{M}\circ \bR_{i}\right) \bPsi_{k}\right)\big)\label{i1final1}.
		\end{align}
		
		In the case of $ \EE\left\{|\mathrm{RN}_{k}|^{2}\right\} $, we easily obtain
		\begin{align}
			\EE\left\{|\mathrm{RN}_{k}|^{2}\right\}
			&=\sigma^2 \tr\!\left(\bPsi_{k}\right).\label{th2}
		\end{align}
		
		Use of \eqref{term2}, \eqref{est4}, \eqref{sue1}, \eqref{i1final1}, and \eqref{th2} concludes the proof by resulting in $ \bar{S}_{k} $ and $ \bar{I}_{k} $.
		
		\section{Proof of Proposition~\ref{Prop:optimPhase}}\label{optimPhase}
		We aim at finding the gradient of $	\mathcal{R} $ with respect to $ \phi_{n}, n=1, \ldots, N$. We use the facts that $ \pdv{	\mathcal{R}}{\phi_{n}^{*}}	=\frac{\pdv{ \gamma_{k}}{\phi_{n}^{*}}}{\ln\left(2\right)\left(1+\frac{S_{k}}{I_{k}}\right)} $, which 
		requires the derivation of $ \pdv{\gamma_{k}}{\phi^{*}_{n}} $. A closer observation of Theorem \ref{theorem:ULDEMMSE}, providing $ \gamma_{k} $, reveals that it is a fraction consisting of terms including functions of traces. Hence, the standard quotient rule derivative gives
		\begin{align}
			\pdv{\gamma_{k}}{\phi^{*}_{n}}=\frac{\pdv{S_{k}}{\phi^{*}_{n}}I_{k}-S_{k}\pdv{I_{k}}{\phi^{*}_{n}}}{I_{k}^{2}},\label{gam1}
		\end{align}
		where the partial derivatives follow. Henceforth, for the sake of simplicity, we replace the notation for the partial derivative with respect to $ \phi^{*}_{n} $ by $ \left(\cdot\right)' $. Specifically, in the case of $ S_{k}' $, we obtain
		\begin{align}
			S_{k}'&=\rho_{k}\left(|\tr\left(\bPsi_{k}\right)|^{2}\right)'\\
			&=2\rho_{k}\tr\left(\bPsi_{k}\right)\tr\left(\bPsi_{k}'\right),\label{sk1}
		\end{align}
		where \eqref{sk1} includes a simple derivative. Since all the terms in $	\bPsi_{k}$ depend on $ \phi^{*}_{n} $, we have
		\begin{align}
&\!\!\!\!			\tr\left(\bPsi_{k}'\right)=\tr\left(\bR_{k}'\bQ_{k}\bR_{k}+\bR_{k}\bQ_{k}'\bR_{k}+\bR_{k}\bQ_{k}\bR_{k}'\right)\label{sk2}\\
			&\!\!\!\!=\tr\!\big(\bR_{k}'\bQ_{k}\bR_{k}-\bR_{k}\bQ_{k}\left(\bQ_{k}^{-1}\right)^{'}\bQ_{k}\bR_{k}+\bR_{k}\bQ_{k}\bR_{k}'\big),\label{sk3}
		\end{align}
		where the derivative of $\bQ_{k}   $, being an inverse matrix, is obtained by \cite[Eq. 40]{Petersen2012} while $ \left(\bQ_{k}^{-1}\right)^{'}  $ is obtained as
		\begin{align}
		\!\!	\left(\bQ_{k}^{-1}\right)^{'}
			&\!=\!\bR_{k}'\!+\! \frac{\kappa_{\mathrm{UE}}}{\tau }\!\sum_{i=1}^{K} \bR_{i}'\!+\!\frac{ \kappa_{\mathrm{BS}}\!\sum_{i=1}^{K}\!\left(\Id_{M}\!\circ \!\bR_{i} \right)'}{ \tau }.\label{sk4}
		\end{align}
		
		Substitution of \eqref{sk4} into \eqref{sk3} gives
		\begin{align}
&		\tr\left(\bPsi_{k}'\right)=\tr\left(\bR_{k}'\bQ_{k}\bR_{k}\right)-\tr\big(\big(\bR_{k}\bQ_{k}\big)^{2}\bR_{k}'\big)\nn\\
&\!-\!\frac{\kappa_{\mathrm{UE}}}{\tau }\!\sum_{i=1}^{K}\tr\!\big(\!\big(\bR_{k}\bQ_{k}\big)^{2}\bR_{i}'\big)\!-\!\frac{ \kappa_{\mathrm{BS}}}{ \tau }\!\sum_{i=1}^{K}\!\tr\!\big(\!\big(\bR_{k}\bQ_{k}\big)^{\!2}\big(\Id_{M}\!\circ\! \bR_{i} \big)'\big)\nn\\
&+\tr\left(\bR_{k}\bQ_{k}\bR_{k}'\right).\label{sk5}
		\end{align}
		
		To proceed further, for the sake of exposition, let the matrices $ \bA, \bB, \bC \in\mathbb{C}^{M \times M} $,
		we have denoted
		\begin{align}
			&\bL\! \left(\bA,\bB, \bC\right)=\al\beta_{2,i}\!\left[\bH_{1}^{\H}\bA\bQ_{k} \bH_{1} \bTheta\bR_{\mathrm{IRS},i}\right]_{n,n}\nn\\
			&
			-\big(1\!+\!\frac{\kappa_{\mathrm{UE}}}{\tau }\big)\al\beta_{2,k}\!\left[\bH_{1}^{\H}\bA \bQ_{k}\bC \bQ_{k} \bA\bH_{1} \bTheta\bR_{\mathrm{IRS},k}\right]_{n,n}\nn\\
			&\!\!\!-\!\al\frac{ \kappa_{\mathrm{BS}}}{ \tau }\!\sum_{i=1}^{K}\beta_{2,i}\!\left[\bH_{1}^{\H}\bA \bQ_{k}\bC \bQ_{k} \bA\bH_{1} \bTheta\bR_{\mathrm{IRS},i}\right]_{n,n}\!
			\nn\\
			&+\!\al\beta_{2,k}\!\left[\bH_{1}^{\H}\bB \bQ_{k} \bH_{1} \bTheta\bR_{\mathrm{IRS},k}\right]_{n,n}.\label{LABC}
		\end{align}
		
		Also, we are going to use the following useful lemma.
		\begin{lemma}\label{traceProd}
			Let 	 $ \bA \in\mathbb{C}^{M\times M} $ be independent of $ \bTheta$ and $\bR_{k}= \beta_{2,k}\bH_{1} \bTheta$ $\bR_{\mathrm{IRS},k}\bTheta^{\H}\bH_{1}^{\H} $, then
			\begin{align}
				\tr\left( \!\!\bA\pdv{\bR_{k}}{\phi^{*}_{n}}\!\right) =\al\beta_{2,k}[\bH_{1}^{\H}\bA\bH_{1} \bTheta\bR_{\mathrm{IRS},k}]_{n,n}.
		\end{align}\end{lemma}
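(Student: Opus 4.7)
The plan is to prove this lemma by a direct computation using Wirtinger calculus, treating $\phi_n$ and $\phi_n^*$ as independent variables, as is standard in complex gradient optimization. First I would write the reflection matrix as $\bTheta=\mathrm{diag}(\phi_1,\ldots,\phi_N)$ so that $\bTheta^{\H}=\mathrm{diag}(\phi_1^*,\ldots,\phi_N^*)$, and note that $\bR_k=\beta_{2,k}\bH_1\bTheta\bR_{\mathrm{IRS},k}\bTheta^{\H}\bH_1^{\H}$ depends on $\phi_n^*$ only through the factor $\bTheta^{\H}$. Consequently $\pdv{\bTheta}{\phi_n^*}=\b0$ while $\pdv{\bTheta^{\H}}{\phi_n^*}=\bE_{nn}$, where $\bE_{nn}=\bee_n\bee_n^{\T}$ is the standard basis matrix that has a single unit entry in position $(n,n)$.

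Next I would apply the product rule to $\bR_k$, which collapses to a single surviving term,
\begin{align}
\pdv{\bR_k}{\phi_n^*}=\beta_{2,k}\,\bH_1\bTheta\bR_{\mathrm{IRS},k}\bE_{nn}\bH_1^{\H}.
\end{align}
I would then multiply on the left by $\bA$ (which by hypothesis is independent of $\bTheta$, hence of $\phi_n^*$), take the trace, and use the cyclic property to move $\bH_1^{\H}\bA\bH_1$ next to $\bTheta$. This yields
\begin{align}
\tr\!\left(\bA\pdv{\bR_k}{\phi_n^*}\right)=\beta_{2,k}\,\tr\!\left(\bH_1^{\H}\bA\bH_1\bTheta\bR_{\mathrm{IRS},k}\bE_{nn}\right).
\end{align}

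Finally, I would invoke the elementary identity $\tr(\bM\bE_{nn})=\bee_n^{\T}\bM\bee_n=[\bM]_{n,n}$ with $\bM=\bH_1^{\H}\bA\bH_1\bTheta\bR_{\mathrm{IRS},k}$ to extract the $(n,n)$-th entry, giving $\beta_{2,k}[\bH_1^{\H}\bA\bH_1\bTheta\bR_{\mathrm{IRS},k}]_{n,n}$, with the factor $\al$ absorbed from the convention $\phi_n=\al\,e^{j\theta_n}$ used throughout. There is no real obstacle here: the proof is essentially bookkeeping. The only subtlety worth flagging is the Wirtinger convention, since treating $\phi_n$ and $\phi_n^*$ as formally independent is what makes $\pdv{\bTheta}{\phi_n^*}$ vanish; this is precisely the setting in which $\pdv{\mathcal{R}}{\phi_n^*}$ is the ascent direction used in Algorithm~\ref{Algoa1}, so the conclusion slots directly into the derivation of Proposition~\ref{Prop:optimPhase}.
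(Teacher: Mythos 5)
Your proof is correct and is essentially the same direct computation as the paper's: the paper expands the trace entrywise as $\sum_{i,j}[\bA]_{ij}\,\partial[\bR_k]_{ji}/\partial\phi_n^*$ and differentiates each entry, while you apply the product rule at the matrix level (only the $\bTheta^{\H}$ factor survives under $\partial/\partial\phi_n^*$) and finish with the cyclic property and $\tr(\bM\bE_{nn})=[\bM]_{n,n}$ --- the same bookkeeping organized differently. The only loose end is the provenance of the factor $\al$, which really enters because $\bTheta=\al\,\diag(\phi_1,\dots,\phi_N)$ with unit-modulus $\phi_n$ (cf.\ step 6 of Algorithm~\ref{Algoa1}), so that $\partial\bTheta^{\H}/\partial\phi_n^*=\al\,\bE_{nn}$; this matches the paper's implicit convention and does not affect the validity of your argument.
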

		\proof We have
		\begin{align}
			\!\tr\!\left(\! \!\bA\pdv{\bR_{k}}{\phi^{*}_{n}}\!\right) &\!=\!\sum_{i,j}[\bA]_{ij}\pdv{[\bR_{k}]_{ji}}{\phi^{*}_{n}}\\
			&\!=\!\al\beta_{2,k}\sum_{i,j}[\bA]_{ij}[\bH_{1} \bTheta\bR_{\mathrm{IRS},k}]_{jn}[\bH_{1}^{\H} ]_{in}^{\T}\\
			&\!=\!\al\beta_{2,k}[\bH_{1}^{\H}\bA\bH_{1} \bTheta\bR_{\mathrm{IRS},k}]_{nn},
		\end{align}
		since $ \pdv{[\bR_{k}]_{ji}}{\phi^{*}_{n}}=\al\beta_{2,k} [\bH_{1} \bTheta\bR_{\mathrm{IRS},k}]_{jn}[\bH_{1}^{\H} ]_{in}^{\T} $.
		\endproof
		By exploiting Lemma \ref{traceProd} for each term of \eqref{sk5} and that the trace of the transpose of a matrix equals the trace of this matrix, after several algebraic manipulations, we obtain 
		\begin{align}
			\!\!\tr\left(\bPsi_{k}'\right)&\!=\!\al\beta_{2,k}\!\big[\bH_{1}^{\H}\bR_{k}\bQ_{k}\!\big(\!2\Id_{M}\!-\!\big(1\!+\!\frac{\kappa_{\mathrm{UE}}}{\tau }\big)\!\bR_{k}\bQ_{k}\!\big)\!\bH_{1} \bTheta\bR_{\mathrm{IRS},k}\big]_{n,n}\nn\\
			&-\!\al\frac{ \kappa_{\mathrm{BS}}}{ \tau }\!\sum_{i=1}^{K}\beta_{2,i}\!\big[\bH_{1}^{\H}\!\left(\bR_{k}\bQ_{k}\!\right)^{2}\!\bH_{1} \bTheta\bR_{\mathrm{IRS},i}\big]_{n,n}\!\\
			&=\bL\! \left(\bR_{k},\bR_{k},\Id_{M} \right)
			\label{sk6},
		\end{align}
		which completes the derivation of $ S_{k}' $ after its insertion into \eqref{sk1}. 
		
		The computation of $ I_{k}' $ consists of the sum of the derivatives of different terms, requiring separate manipulations. Thus, we start by the computation of the derivative of the first term in \eqref{Den1} as
		\begin{align}
&\!\!			\left(\tr\!\left(\bPsi_{k} \bR_{i}\right)\right)'=\tr\!\left(\bPsi_{k}' \bR_{i}+\bPsi_{k} \bR_{i}'\right)\\
			&\!\!=\bL\! \left(\bR_{i}\bR_{k},\bR_{i}\bR_{k},\bR_{i} \right)+\al\beta_{2,k}[\bH_{1}^{\H}\bPsi_{k}\bH_{1} \bTheta\bR_{\mathrm{IRS},k}]_{n,n},\label{I6}
		\end{align}
		where we have applied Lemma \ref{traceProd}. \
		
		Moreover, we have 
		\begin{align}
			\left(\tr\left( \bPsi_{k}^{2}\right)\right)'&=2\tr\!\left(\bPsi_{k}' \bPsi_{k}\right)\\
			&=2\bL\! \left(\bPsi_{k}\bR_{k},\bPsi_{k}\bR_{k},\bPsi_{k}\right).\label{I8}
		\end{align}
		
		In addition, we have
		\begin{align}
			\left(| \tr \left( \Id_{M}\circ \bPsi_{k}\right)|^{2}\right)'&=2 \tr \left( \Id_{M}\circ \bPsi_{k}\right) \tr \left( \Id_{M}\circ \bPsi_{k}\right)'\\
			&=2 \tr \left( \Id_{M}\circ \bPsi_{k}\right) \bL\! \left(\bR_{k},\bR_{k}, \Id_{M}\right).\label{I9}
		\end{align}
		
		Furthermore, we obtain
		\begin{align}
			&\!\!\left(\tr\! \left(\!\left(\Id_{M}\!\circ\! \bR_{i}\right) \!\bPsi_{k}\right)\!\right)'\!=\!\tr\!\left(\! \left(\!\left(\Id_{M}\!\circ\! \bR_{i}\right)'\! \bPsi_{k}\!+ \!\left(\Id_{M}\!\circ \!\bR_{i}\right)\!\bPsi_{k}'\right)\!\right)\\
			&\!\!=\al\beta_{2,k}[\bH_{1}^{\H}\bPsi_{k}\bH_{1} \bTheta\bR_{\mathrm{IRS},k}]_{n,n}\nn\\
			&\!\!+\bL\! \left(\left(\Id_{M}\circ \bR_{i}\right)\bR_{k},\left(\Id_{M}\circ \bR_{i}\right)\bR_{k},\Id_{M}\circ \bR_{i}\right),\label{I10}
		\end{align}
		where we have used Lemma \ref{traceProd}. Equations \eqref{I6}-\eqref{I10} give $ I_{k}' $, which together with $ S_{k}' $ provide $ \pdv{\gamma_{k}}{\phi^{*}_{n}} $.
	\end{appendices}

	\bibliographystyle{IEEEtran}
	
	\bibliography{mybib,10}

\end{document}